\documentclass[%
    aps, 
    physrev,
    reprint,
]{revtex4-2}

\usepackage{amsmath}
\usepackage{amssymb}
\usepackage{bm}
\usepackage{braket}
\usepackage{comment}
\usepackage[normalem]{ulem}

\usepackage{printlen}
\usepackage{float}
\usepackage{graphicx}
\usepackage{xcolor}
\usepackage[caption=false]{subfig}

\usepackage{enumitem}

\usepackage{dcolumn}
\usepackage{multirow}
\usepackage{makecell}
\usepackage{tabularx}
\usepackage{booktabs}
\usepackage{array}

\usepackage{tikz}

\usepackage[colorlinks=true, allcolors=blue]{hyperref}
\hypersetup{
    colorlinks=true,
    linkcolor=blue,
    urlcolor=blue,  
}
\usepackage[capitalise]{cleveref}
\usepackage{orcidlink}
\usepackage{quantikz}

\makeatletter
\newcommand{\vast}{\bBigg@{4}}
\newcommand{\Vast}{\bBigg@{5}}
\makeatother

\usepackage{braket}

\newtheorem{proposition}{Proposition}[section]

\newenvironment{proof}{%
    \par\noindent\textit{Proof.}\ %
}

\begin{document}

\title{Walking Floquet code circuits for zero-overhead leakage reduction}

\author{Hanna Westerheim}
\author{Kaavya Sahay}
\author{Shruti Puri}
\affiliation{Department of Applied Physics, Yale University, New Haven, Connecticut 06511, USA\\
Yale Quantum Institute, Yale University, New Haven, Connecticut 06511, USA}

\begin{abstract}

Leakage, occurring when a qubit undetectably exits the computational subspace, poses a significant challenge for quantum error correction by inducing correlated errors in space and time.
These correlations reduce both the code threshold and the effective code distance. 
To address this challenge, we introduce two zero-overhead walking circuits for the honeycomb Floquet code (hFC), termed the swirling and sliding circuits, which periodically remove leakage while dynamically protecting logical qubits via a schedule of anticommuting two-body measurements.
Unlike previous dynamic circuits for the hFC, ours preserves the distance to Pauli errors.
We further study their performance under leakage and find that, for two leakage noise models, they may correct as many leakage errors as Pauli errors,
improving on the more widely known walking surface code.  
Through numerical simulation, we observe that finite-error-rate performance under leakage is strongly influenced by entropic effects, producing a pronounced waterfall regime in which the logical error rate decreases much more rapidly with physical error rate than the expected distance-limited scaling. In this regime, even when asymptotic predictions suggest otherwise, the hFC can outperform the same-distance walking surface code in error rates and sub-threshold logical error scaling.

\end{abstract}
\date{\today}

\maketitle

\section{Introduction}

Fault-tolerant quantum computing is possible with the use of quantum error correcting codes. Quantum error correction (QEC) enables the preservation of logical information given certain types of errors on the underlying physical qubits.
However, QEC codes are particularly under-equipped to deal with \textit{leakage}, a form of error where qubits undetectably exit the computational subspace. Leakage, for example, may arise from mechanisms such as atom loss in neutral atom architectures and transmon heating in superconducting qubits~\cite{Bluvstein_2023,bluvstein2025fault, Miao_2023, Chen_2016, Dai_2026_dust, google_2023}.
A leaked qubit can cause large correlated errors across QEC codes, damaging their error suppression ability. Mitigating the impact of leakage on logical performance is becoming increasingly important as fidelities to Pauli errors improve, and leakage now accounts for a growing fraction of the error budget in several hardware platforms~\cite{computing2026quantumerrorcorrectiontoric, chen2016measure}. Consequently, QEC protocols are increasingly co-designed with additional subroutines that prevent or mitigate leakage errors. 

One such approach is the periodic reset of qubits using leakage reduction units (LRUs). LRUs can be implemented by swapping data and ancilla qubits within syndrome extraction circuits~\cite{McEwen_2023, eickbush2025demonstration}. When combined with codes that have low-weight check operators, LRUs remove leakage more rapidly, thus limiting the extent to which the resulting correlated effects can propagate through the syndrome extraction circuits~\cite{Brown_2019}.

One class of codes with low-weight check operators is the family of dynamically generated honeycomb Floquet codes~\cite{Hastings_2021,Gidney_2021}. In their static stabilizer description, they encode no logical qubits. Instead, logical information is dynamically encoded and protected through a sequence of anticommuting two-body check measurements. In this work, we introduce zero-overhead, leakage-reducing circuits for the honeycomb Floquet code (hFC). 
In contrast to previous dynamic circuits for the hFC~\cite{williamson2025dynamicalquantumcodeslogic, claes2025dynamiccircuithoneycombfloquet}, our circuits preserve the code's distance to Pauli errors and remove leakage without requiring additional gates or timesteps.
We first present the \textit{swirling Floquet code}, an intuitive implementation compatible with reconfigurable architectures.
We then introduce the \textit{sliding Floquet code}, a practical implementation designed for fixed-qubit architectures.
By leveraging the hFC's weight-2 check measurements, both implementations limit the propagation of leakage-induced correlated errors compared to the walking surface code. 
We study several physically motivated leakage noise models, demonstrating both asymptotic and practically relevant regimes where the hFC outperforms the same-distance surface code under leakage noise. 

\begin{figure*}[htbp]
    \centering
    \includegraphics[width=.85\textwidth]{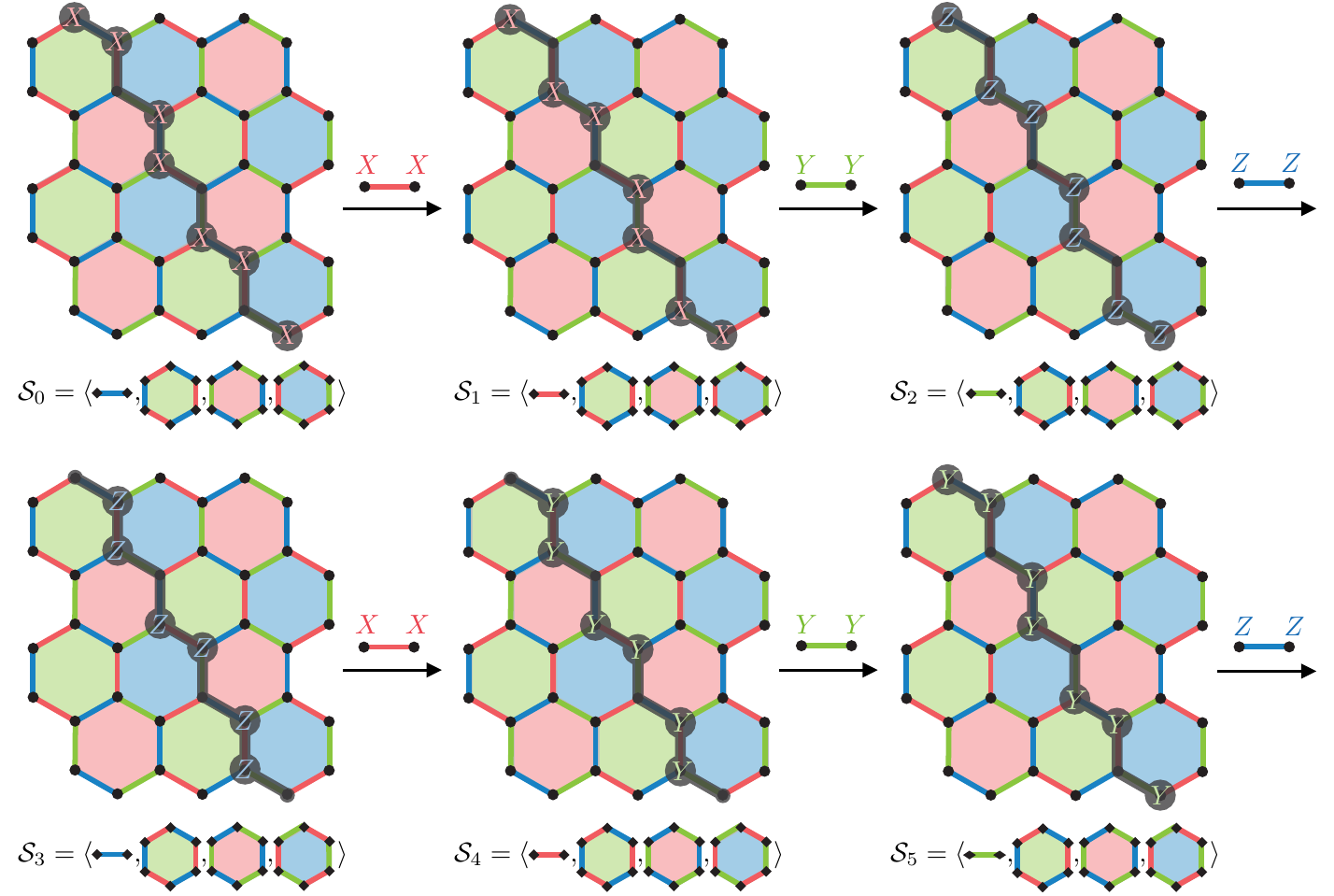}
    \caption{\textbf{The P6 honeycomb Floquet code (hFC) on the hexagonal lattice.}
    Each panel represents the hFC at a particular point in time, where we show the evolution of a logical operator representative and the ISG. Check sets represented by the edges above the arrows are measured between time steps.}
    \label{fig:logical}
\end{figure*}

This work is organized as follows. In \cref{sec:prelims}, we introduce necessary background. In \cref{sec:circs}, we present the new circuits that implement the hFC with zero overhead while periodically removing leakage. We then study their performance both analytically and numerically in \cref{sec:anadv} and \cref {sec:numres}, comparing them with the walking surface code. Finally, we conclude in \cref{sec:concl}.

\section{Preliminaries} \label{sec:prelims}

Here, we define the Floquet codes, noise models, and circuit gadgets used throughout this work. We defer a review of the leakage-reducing ``walking" surface code~\cite{McEwen_2023}, which serves as our point of comparison, to \cref{app:walksc}. 

\subsection{The honeycomb Floquet code}

The honeycomb Floquet code (hFC), shown in Figure~\ref{fig:logical}, is defined on a trivalent, three-colorable lattice forming a hexagonal tiling of a 2D plane. Each hexagonal plaquette is assigned a color $\mathbf{c} \in \{ \mathbf{r,g,b} \}$ such that no two adjacent plaquettes share the same color. An edge connecting two  $\mathbf{c}$ plaquettes is assigned the color $\mathbf{c}$. For the toric manifold with periodic boundary conditions, given a tiling with $n_p$ plaquettes, this lattice has $2 n_p$ vertices and $3 n_p$ edges.

The hFC QEC protocol on this lattice can be defined as follows. Data qubits are placed on the vertices of the toric honeycomb lattice, and subsets of the hFC's edges are measured according to a three-step periodic schedule. In this work, we consider the P6 hFC variant, in which the measured check operators at timestep $t$ are given by
\begin{equation}
\begin{aligned}
PP_\mathbf{c}(t) =
\begin{cases}
XX_\mathbf{r}, & t \bmod 3 = 0, \\
YY_\mathbf{g}, & t \bmod 3 = 1, \\
ZZ_\mathbf{b}, & t \bmod 3 = 2,
\end{cases}
\end{aligned} 
    \label{eqn:msmt schedule}
\end{equation}
where $PP_\mathbf{c}$ denotes the two-body Pauli operator $PP$ acting on edges of color $\mathbf{c}.$ This measurement sequence is illustrated in \cref{fig:logical}. On the toric manifold, this procedure effectively realizes a $[[2n_p,2,\sqrt{n_p}]]$ dynamical QEC code.

Note that consecutive $XX_\mathbf{r} \rightarrow YY_\mathbf{g} \rightarrow ZZ_\mathbf{b}$ measurements incident to same vertex anticommute. 
As a result, instead of using a static stabilizer group ubiquitous to stabilizer codes, an \textit{instantaneous stabilizer group} (ISG) is used for error correction.
The ISG for the P6 code, represented by plaquettes in \cref{fig:logical}, is generated by six-body operators which survive the individually anticommuting check measurements, as well as an evolving set of two-body check operators.
In the periodic steady state, the ISG is given by:
\begin{equation}
    \begin{aligned}
\mathcal{S}_t = \{ X^{\otimes 6}_{ \mathbf{r}}, Y^{\otimes 6}_{\mathbf{g}} ,
Z^{\otimes 6}_{ \mathbf{b}} \} \cup
\begin{cases}
XX_r, & t \bmod 3 = 0, \\
YY_g, & t \bmod 3 = 1, \\
ZZ_b, & t \bmod 3 = 2,
\end{cases}
\end{aligned}
    \label{eqn:isg}
\end{equation}
where, by abuse of notation, $P^{\otimes 6}_{ \mathbf{c}}$ denotes the  set of 6-body Pauli operators supported on all plaquettes of color~$\mathbf{c}$. 
In \cref{fig:logical}, the ISG at each timestep $t$ is denoted by $\mathcal{S}_t$ below each corresponding panel.

\begin{figure*}[t]
\centering
\includegraphics[width=\textwidth]{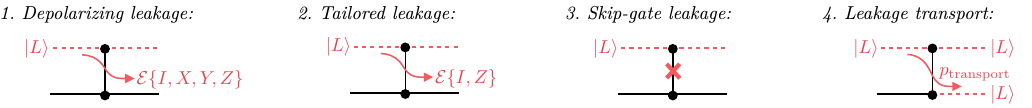}
\caption{\textbf{The four leakage models considered in this work}. The models differ in the effect of a leaked qubit on a sealed qubit upon interaction through a two-qubit Clifford CP gate, where P is a Pauli (CZ is shown as an example).} \label{fig:leakage-models}
\end{figure*}

Error correction in the hFC is performed using the values of the above six-body plaquette operators, each given by the product of the six edges along the boundary of the plaquette. We defer further details of error correction and decoding to  \cref{app:detectors}. 

We note that other Floquet code variants have been proposed, differing in their lattice geometries~\cite{alam2025dynamical}, measurement schedules~\cite{Davydova_2023}, and resilience to biased noise~\cite{Setiawan_2025}. 
The results of this work, though focused on the P6 hFC, can be extended in a straightforward manner to these alternate variants.

\subsection{Leakage noise models}

Throughout this work, we employ a uniform leakage noise model in which, at every circuit spacetime location, a qubit may leak from the computational subspace with probability $p$. The effect leakage has on logical performance depends on 
what happens when a leaked qubit interacts with another
un-leaked, i.e. {sealed}, qubit through a controlled-Pauli CP gate such as CX or CZ. In this paper, we consider the following four experimentally motivated leakage effects, summarized in \cref{fig:leakage-models}.

\textit{1. Depolarizing leakage:} The sealed qubit is completely depolarized by the channel $\mathcal{E}\{I,X,Y,Z\}$~\cite{chang2024surfacecodeimperfecterasure,pavlovich2026erasuresurfacecodecircuit}, i.e.,
\begin{equation}
    \rho \rightarrow \frac{\rho +
    X \rho X + 
    Y \rho Y + Z \rho Z 
    }{4} .
\end{equation}

\textit{2. Tailored leakage:} The sealed qubit experiences either the channel $\mathcal{E}\{I,Z\}$ or $\mathcal{E}\{I,P\}$, depending on whether it was the control or target of the CP gate, respectively. We have
\begin{align}
  \mathcal{E}\{I,Z\} &: \rho \rightarrow \frac{\rho + Z \rho Z }{2} ,\\
  \mathcal{E}\{I,P\} &: \rho \rightarrow \frac{\rho + P \rho P }{2}.
\end{align}

\textit{3. Skip-gate leakage:}
The action on the sealed qubit is as if the CP gate did not occur, i.e.,
$$ \rho_{\text{leaked}} \otimes \rho
\xrightarrow {CP}
\rho_{\text{leaked}} \otimes \rho .$$

\textit{4. Leakage transport:} It is possible for leaked qubits to enhance the probability of the sealed qubits they interact with to leak ~\cite{Miao_2023}. To understand the effect of such leakage transport we will consider a noise channel where a sealed qubit may leak with probability $p_{\mathrm{transport}}$ if it interacts with a leaked qubit:
\begin{equation}
    \rho \rightarrow (1-p_{\mathrm{transport}}) \rho + p_{\mathrm{transport}} \ket{L}\bra{L} .
\end{equation}

\subsection{Logical error rate ans\"atze}

In general, the logical error rate of QEC codes may be expressed as
\begin{equation}
    p_{L}
    =
    \sum_{w=w_{\min}}^{N}
    N_w p^w (1-p)^{N-w},
\end{equation}
where $N$ is the total number of circuit fault locations, $N_w$ is the number of malignant error configurations of weight $w$, and $w_{\min}$ is the minimum weight of any malignant error configuration. Here, we refer to an error configuration that causes logical failure as \emph{malignant}.

In the asymptotic low-error-rate regime, the logical error rate is dominated by minimum-weight malignant configurations, yielding
\begin{equation}
    p_{L}
    =
    N_{w_{\min}} p^{w_{\min}}
    + \mathcal{O}\!\left(p^{w_{\min}+1}\right)
    \sim p^{w_{\min}},
\end{equation}
where $w_{\min}$ depends on the code, decoder, and error model. However, this asymptotic regime often proves difficult to probe~\cite{beverland2025failfasttechniquesprobe}, and may not be relevant for quantum hardware with component error rates $\gg 10^{-5}$.
At these experimentally relevant physical error rates, higher-weight malignant configurations can contribute substantially due to their combinatorial multiplicity. In this case the logical error rate is dominated by these entropic contributions, a regime we refer to as the \emph{waterfall regime}, while only at sufficiently low error rates may the asymptotic scaling emerge~\cite{english2025isingdonutregimestopological,gu2026scalableneuraldecoderspractical}. We find that for the physical error rates simulated in this paper the logical error rate fits very well to a single exponential $\sim p^{\alpha}$, and we refer to $\alpha$ as the effective scaling exponent.

\subsection{Leakage mitigation strategies}

\textit{Leakage reduction units} are circuit gadgets designed to remove leakage via the transfer of  information onto freshly initialized physical qubits. Since these gadgets may themselves be faulty, it is desirable that LRUs are compact and low-overhead. In this work, we consider LRUs that can be folded into syndrome extraction circuits (SECs) with no additional qubits, gates, or circuit depth.

We begin with the example of a SEC \textit{without} leakage reduction. In the hFC, syndrome extraction is performed by measuring the two-body check operator associated with each lattice edge. The circuit in~\cref{fig:standard-sec} measures a $ZZ_b$ check, corresponding to a $ZZ$ operator acting on the pair of data qubits joined by a blue edge in the lattice.
Measuring the ancilla yields the $ZZ$ eigenvalue while projecting the data qubits onto the corresponding $ZZ$ eigenspace.

\begin{figure}[ht]
\centering
\includegraphics[width=\columnwidth]{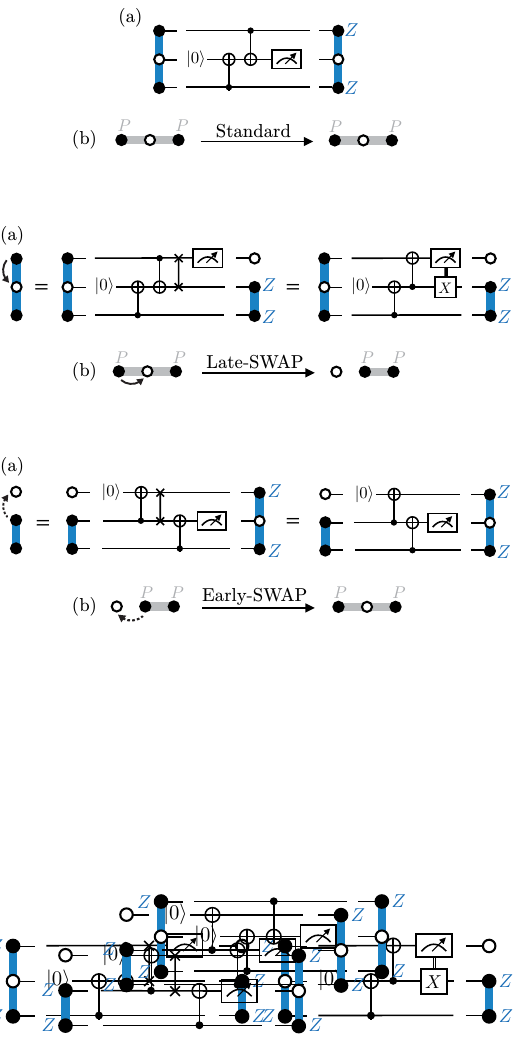}
\caption{\textbf{Standard syndrome extraction circuit (SEC).} (a) The ancilla qubit (middle wire) is reset, entangled with the two data qubits, and then measured. The measurement yields the eigenvalue of $ZZ$ on the data qubits. (b) The standard SEC preserves data/ancilla qubit role assignments, unlike the late- and early-SWAP SECs shown in \cref{fig:late-swap,fig:early-swap}.}
\label{fig:standard-sec}
\end{figure}

LRU gadgets can be integrated into SECs to remove leakage. In this work, we employ SWAP gates which transfer data qubit information onto freshly initialized ancillae. By periodically swapping the roles of data and ancilla qubits throughout the QEC cycle, \textit{all} physical qubits are periodically reset.
This is in contrast to the standard approach, where only a static subset of physical qubits which serve as ancillae are periodically reset.
As previously stated, it is ideal to perform these SWAP operations with no additional overhead relative to the standard SEC. To achieve this, we make use of the following basic circuit identities in~\cref{fig:identities}.
\begin{figure}[ht]
\centering
\includegraphics[width=\columnwidth]{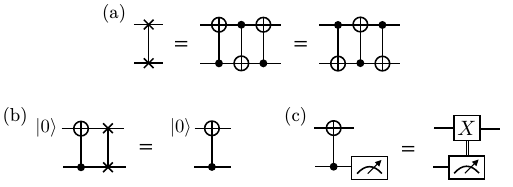}
\caption{\textbf{SWAP gate identities.} (a) The SWAP gate can be decomposed into three CX gates. (b) The SWAP gate on the left-hand side of the equality is trivial.
(c) A CX gate followed by measurement in the $Z$($X$) basis on the control(target) qubit can be compiled as a classically controlled $X$($Z$) gate.}
\label{fig:identities}
\end{figure}

By applying these identities, a zero-overhead SWAP may be realized at various points within the SEC. We employ two distinct SEC-LRU gadgets: the \textit{late-SWAP} and the \textit{early-SWAP} circuits. In each of these cases, the resulting condensed circuits simultaneously (1) read out the syndrome while leaving the remaining qubits in the corresponding eigenspace, and (2) swap information between ancillae and data qubits without any additional qubits, gates, or circuit depth when compared to the standard SEC.

\textit{1. Late-SWAP:} In this case, a SWAP is added after the final gate of the SEC, exchanging the roles of the ancilla and a data qubit \cite{Suchara2015,Brown2020CriticalFaults,Brown2019MixedQubit}, illustrated below in~\cref{fig:late-swap}. The reduced circuit on the right~\cite{Baranes_2026leveraging,shaw2026optimisingquantumerrorcorrection,kim2026timedynamiccircuitsfaulttolerantshift} utilizes the identities in ~\cref{fig:identities}(a) and (c). In practice, the corresponding classical update can be absorbed into the decoder.
In \cref{fig:late-swap}, observe that the $ZZ$ operator information moves from the top and bottom qubits to the bottom two qubits.

\begin{figure}[ht]
\centering
\includegraphics[width=\columnwidth]{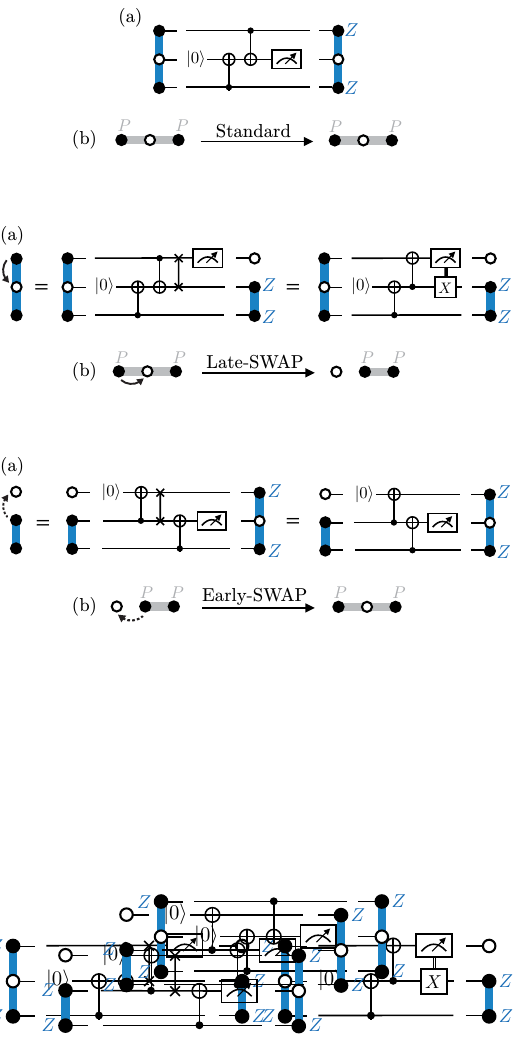}
\caption{\textbf{Late-SWAP syndrome extraction circuit.} (a) The central term is a schematic representation of the late-SWAP circuit, where just before measurement, a SWAP is applied between the ancilla qubit and one of the data qubits. The right-hand side employs~\cref{fig:identities}(a) and (c), yielding the final, zero-overhead circuit. (b) The late-SWAP SEC exchanges the roles of one of the data qubits and the ancilla, with the solid arrow indicating the direction of the data qubit information transfer.}
\label{fig:late-swap}
\end{figure}

\textit{2. Early-SWAP:} In this case, the SWAP gate is inserted immediately after the first gate~\cite{liu2026achievingoptimaldistanceatomlosscorrection, pavlovich2026erasuresurfacecodecircuit}, shown in \cref{fig:early-swap}.
Here, the circuit identity in~\cref{fig:identities}(b) can be used to remove the extra gates associated with the SWAP in order to produce the reduced circuit on the right.

\begin{figure}[H]
\centering
\includegraphics[width=\columnwidth]{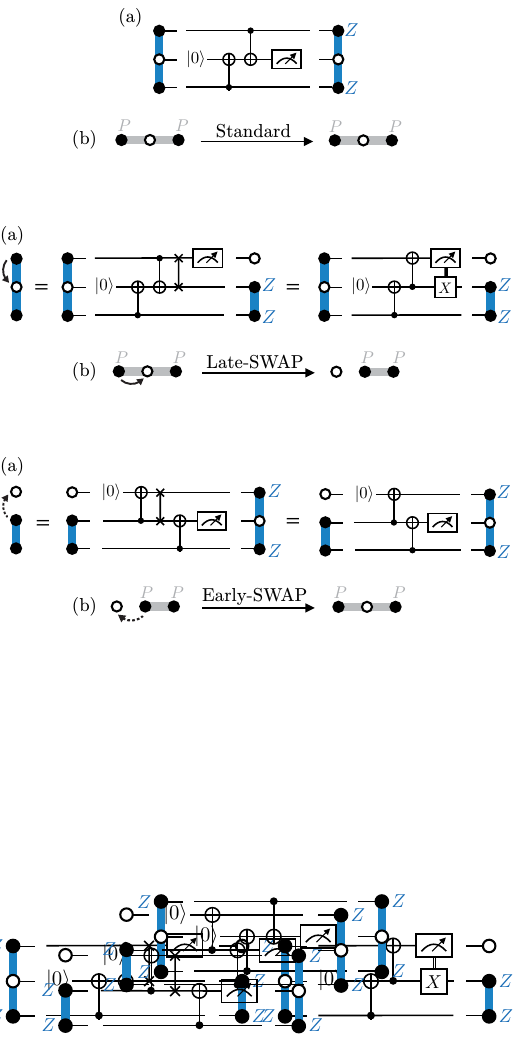}
\caption{\textbf{Early-SWAP syndrome extraction circuit.} (a) The central term is a schematic representation of the early-SWAP circuit implementation, where a SWAP is applied between the ancilla qubit and one of the data qubits just after the first CX gate. The right-hand side employs~\cref{fig:identities}(b), yielding the final, zero-overhead circuit. (b) The early-SWAP SEC transfers information from one data qubit to the freshly initialized ancilla, as indicated by the dashed arrow.}
\label{fig:early-swap}
\end{figure}

Finally, observe that neither of these LRUs increase the qubit count, gate count, or depth of the SEC, as desired, and are thus zero-overhead.
In previous literature, circuits have implicitly used either late-SWAP or early-SWAP circuits. In this work, we show how both can be integrated into one circuit for the first time in order to achieve a compact, hardware-amenable implementation.

\begin{figure*}[t]
    \centering
    \includegraphics[width=\textwidth]{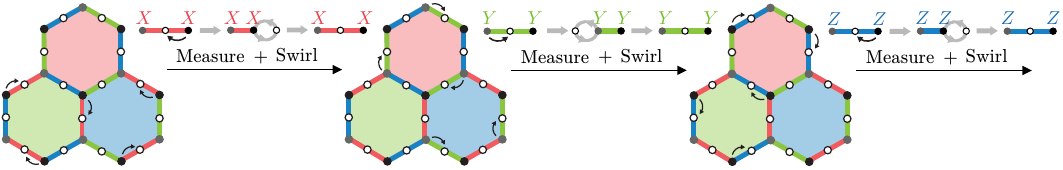}
    \caption{\textbf{Schedule of late-SWAP LRUs in the swirling hFC.} Solid curved arrows indicate late-SWAP LRUs, with arrowheads indicating the direction of data qubit information flow. Black (gray) data qubits are swapped with white ancillae during even (odd) subrounds, respectively, resulting in an intermediate configuration of data and ancilla qubits immediately after measurement. Physical rearrangement (``swirling") then returns the data and ancilla qubits to their original locations before the next measurement subround.}
    \label{fig:swirl}
\end{figure*}

\section{Zero-overhead hFC LRU circuits} \label{sec:circs}

The late- and early-SWAP LRU gadgets provide a clean framework to create zero-overhead leakage-reducing circuits for the hFC. We lay out two distinct circuits that leverage these gadgets in different ways. The first, the \textit{swirling} hFC, is an intuitive realization using exclusively late-SWAP LRUs. The swirling hFC achieves zero-overhead leakage reduction in reconfigurable neutral atom architectures, integrating a local swirling rearrangement of qubits between successive check measurement layers. The second, the \textit{sliding} hFC, is designed for fixed-qubit architectures and employs both early- and late-SWAP LRUs. 
Unlike previous leakage-reducing hFC proposals~\cite{williamson2025dynamicalquantumcodeslogic, claes2025dynamiccircuithoneycombfloquet}, both circuits natively preserve the timelike and spacelike code distances under Pauli noise. We provide these fault tolerance proofs for the late- and early-SWAP SECs for the hFC in~\cref{app:fault-tol}, employing the methods presented in Ref.~\cite{rodatz2026faulttoleranceconstruction}.

\subsection{Swirling hFC }

In the swirling hFC, we apply a late-SWAP LRU gadget during the measurement of every check operator. 
Since each check acts on two data qubits, we can choose one of the data qubits each measurement round to be swapped. To maintain a uniform schedule, we assign a bipartite coloring (black and gray in \cref{fig:swirl}) to the data qubits and choose to reset the black (gray) qubits during even (odd) rounds.

The late-SWAPs used to perform syndrome extraction in the swirling hFC are indicated by the solid black arrows within the code patches in \cref{fig:swirl}, which point in the direction that data qubit information moves. In between rounds in \cref{fig:swirl}, we zoom in on one edge and show a close-up view of the ``measurement + swirl" procedure used in the this hFC variant. During the measurement step, information from one of the data qubits is transferred to the ancilla position via the late-SWAP SEC. At this point, the physical qubit at the center of the edge is now in the role of the data qubit. Next, in order to enable the subsequent round of check measurements, we conceptually envision physically transporting these data qubits back to their original locations, giving rise to a local \textit{swirling} motion.
This description is primarily pedagogical: in practice, in a neutral-atom setting, the swirling motion may be incorporated into the atom movement required for the subsequent two-qubit gate, rather than occurring as an independent transport step.

In \cref{app:allhfcs}, in addition to a zero-overhead ``pipelined" swirling hFC variant, we describe a hFC circuit implementation designed for fixed heavy-hex connectivity, where this physical motion is forgone at the cost of an extra teleportation gate.

\begin{figure*}[t]
    \centering
    \includegraphics[width=.9\textwidth]{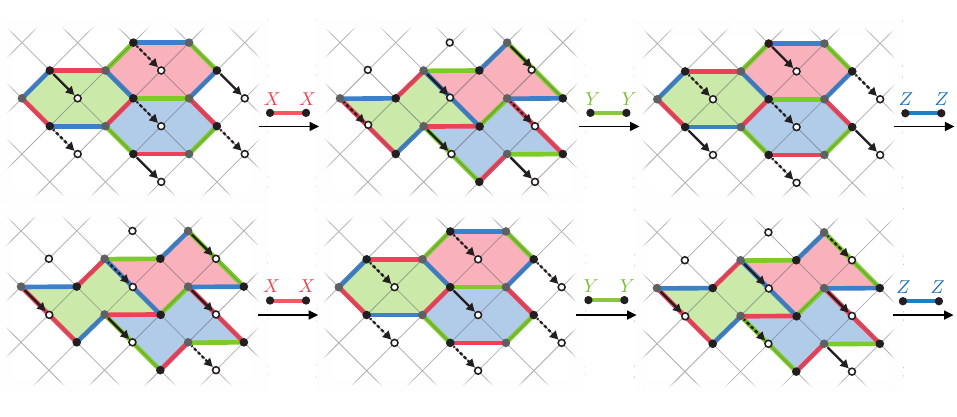}

    \caption{\textbf{Schedule of early- and late-SWAP LRUs in the sliding hFC}, which is compatible with fixed-qubit architectures with square-connectivity. In each panel, we show the LRU arrangement for the indicated upcoming $PP_\mathbf{c}$ measurement, with the post-measurement patch configuration shown in the following panel. Solid arrows in each panel indicate late-SWAP LRUs, while dashed arrows indicate early-SWAP LRUs. The choice of LRU circuit is assigned according to the available local connectivity of data and ancilla qubits, with arrows indicating the direction of data qubit information transfer. In even (odd) subrounds, black (gray) data qubits are swapped.
    Note that in any given panel, colored edges only represent the support of  $PP_\mathbf{c}$ operators, i.e., the square connectivity (gray grid) does not support connections between the qubits at the endpoints of horizontal edges.}

    \label{fig:sliding}
\end{figure*}

\subsection{Sliding hFC}

The swirling hFC is not compatible with fixed-qubit architectures. We now present the \textit{sliding} hFC, which is compatible with fixed nearest-neighbor connectivity. In the sliding hFC, qubits are placed on the vertices of a square lattice, as shown in \cref{fig:sliding}. Unlike previous arrangements for embedding a honeycomb lattice onto square connectivity~\cite{gidney2023newcircuitsopensource}, each hexagonal plaquette is centered around a single lattice site.

As in the swirling hFC, we alternate the globally labeled data qubits that are swapped: in \cref{fig:sliding}, black qubits are swapped during even subrounds, while gray qubits are swapped during odd subrounds.
Unlike the swirling hFC, however, the sliding hFC uses both late- (solid arrows) and early-SWAP (dashed arrows) circuits to realize the desired motion while maintaining square-lattice nearest-neighbor connectivity.

The arrangement of early- and late-SWAPs follows from the internal structures of the two LRU circuits, resulting in different connectivity requirements for fixed architectures.
As implied by \cref{fig:late-swap}, the late-SWAP circuit requires the ancilla to be adjacent to both data qubits involved in the measurement in a fixed layout. 
In contrast, the early-SWAP circuit (\cref{fig:early-swap}) requires the ancilla to be adjacent to the data qubit involved in the swap. This data qubit must then be adjacent to the other data qubit involved in the check measurement.

\begin{figure}
\centering
\includegraphics[width=\columnwidth]{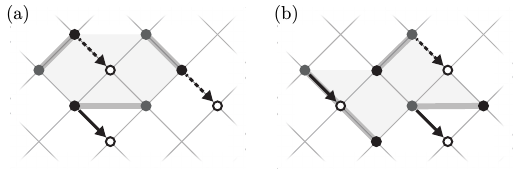}
\caption{\textbf{Local implementation of the sliding hFC} during (a) even and (b) odd check subrounds. Gray edges indicate the edges measured in the corresponding subround, while solid and dashed arrows denote late-SWAP and early-SWAP circuits, respectively. Arrows point in the direction of data-qubit motion. In (a), black data qubits are swapped with white ancillae, while in (b), gray data qubits are swapped. }
\label{fig:sliding-close-up}
\end{figure}

Under these constraints, we therefore assign the early- and late-SWAP circuits to edges based on their available qubit connectivity, as illustrated in \cref{fig:sliding-close-up}. 
In \cref{fig:sliding-close-up}(a), the highlighted gray edges are measured during an even subround, swapping the black data qubits. Solid arrows indicate where a late-SWAP circuit is applied, while dashed arrows indicate application of the early-SWAP circuit. 
The resulting data qubit configuration is displayed in \cref{fig:sliding-close-up}(b), showing an effective distortion of the lattice.

Crucially, this distorted lattice takes a form such that a subsequent round of carefully arranged LRUs restores its original shape.
\cref{fig:sliding-close-up}(b) displays this arrangement of LRUs applied at an odd subround: the complimentary set of edges is measured, the gray data qubits are swapped, and the early- and late-SWAP circuits are assigned according to the available connectivity. 
As desired, the second series of SWAPs restores the hexagonal lattice configuration up to a diagonal translation.
In both panels, the arrows indicate the direction in which the data qubit information flows.

Applying the above strategy globally at each round gives us the period-six procedure shown in \cref{fig:sliding}. Every round, half of the data qubits move diagonally down and to the right by one unit, and then the other half during the next round. The result is that at every other round, the lattice is restored, but translated down and to the right by one unit, hence the name ``sliding."
Note that, by repeated application of this LRU configuration, subsequent sliding steps will continue to move the lattice in the same direction. One may choose to reverse the direction of lattice displacement in order to have the lattice ``wiggle" back and forth. Here, we do not make this choice, for reasons we further explain in \cref{app:counting}.

Finally, we comment that the sliding and swirling hFCs differ in their total qubit requirements.
Note that there are $2d^2$ data qubits. 
In the swirling hFC, all $3d^2$ edges of the lattice are equipped with ancilla qubits, giving a total of $5d^2$ physical qubits. In the sliding hFC, ancilla are reused between subrounds. Thus, the sliding hFC requires only $d^2$ ancilla qubits, for a total of $3d^2$ physical qubits.

\section{Analytical hFC advantages} \label{sec:anadv}

We now analytically study the advantages of the swirling and sliding hFCs over the walking surface code. We show that these hFC circuits may offer advantages in terms of qubit lifetime and asymptotic error scaling.

\subsection{Qubit lifetimes} \label{sec:qublifetimes}

One way we can quantify the effects of leakage is by studying the ``lifetime" of each physical qubit, i.e., the number of entangling gates it experiences between initialization and measurement. Here, a shorter lifetime is desirable, as it limits the number of subsequent gates over which a leakage error can persist and propagate errors.

\begin{figure}
\centering
\includegraphics[width=\columnwidth]{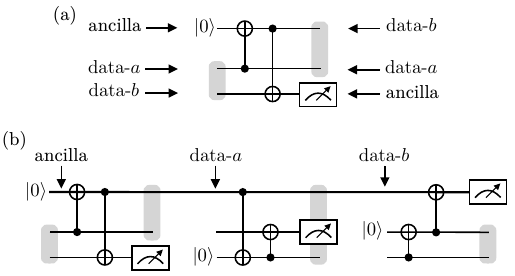}
\caption{\textbf{Counting the qubit lifetime for the swirling hFC.}
(a) Late-SWAP circuit for a $ZZ$-edge measurement. Data qubits are labeled $a$ and $b$, with the ancilla swapping with the data-$b$ qubit during the circuit.
(b) A qubit (top wire) is tracked from initialization to measurement as it changes roles in the late-SWAP circuit. The qubit is initialized as an ancilla and subsequently becomes a data qubit. 
Specifically, in the next round, it takes the role of data-$a$ and is not involved in a SWAP. For simplicity, we again show a $ZZ$-edge measurement, although the next round measures a different edge type in the actual schedule. In the final round, the qubit takes the role of data-$b$, swaps with the ancilla, and is subsequently measured out. Thus, the qubit experiences four gates between initialization and measurement.}
\label{fig:swirling-counting}
\end{figure}

For the walking surface code, we see that qubits consistently experience a lifetime of $8$ gates, shown in \cref{app:walksc}.
In comparison, for the swirling hFC, we find that qubits have a lifetime of only $4$ gates. Figure~\ref{fig:swirling-counting} illustrates this counting for the swirling hFC. 
For the sliding hFC, the lifetime is not uniform, but averages to $4$ gates, which we detail in Appendix~\ref{app:counting}.
Thus, relative to the walking surface code, our hFC circuits reduce the qubit lifetime on average by a factor of two. This reduction limits the number of subsequent operations over which a leakage error can persist, providing an intrinsic advantage of the hFC circuits for mitigating the effects of leakage.

\begin{figure*}[ht]
    \centering
    \includegraphics[width=\textwidth]{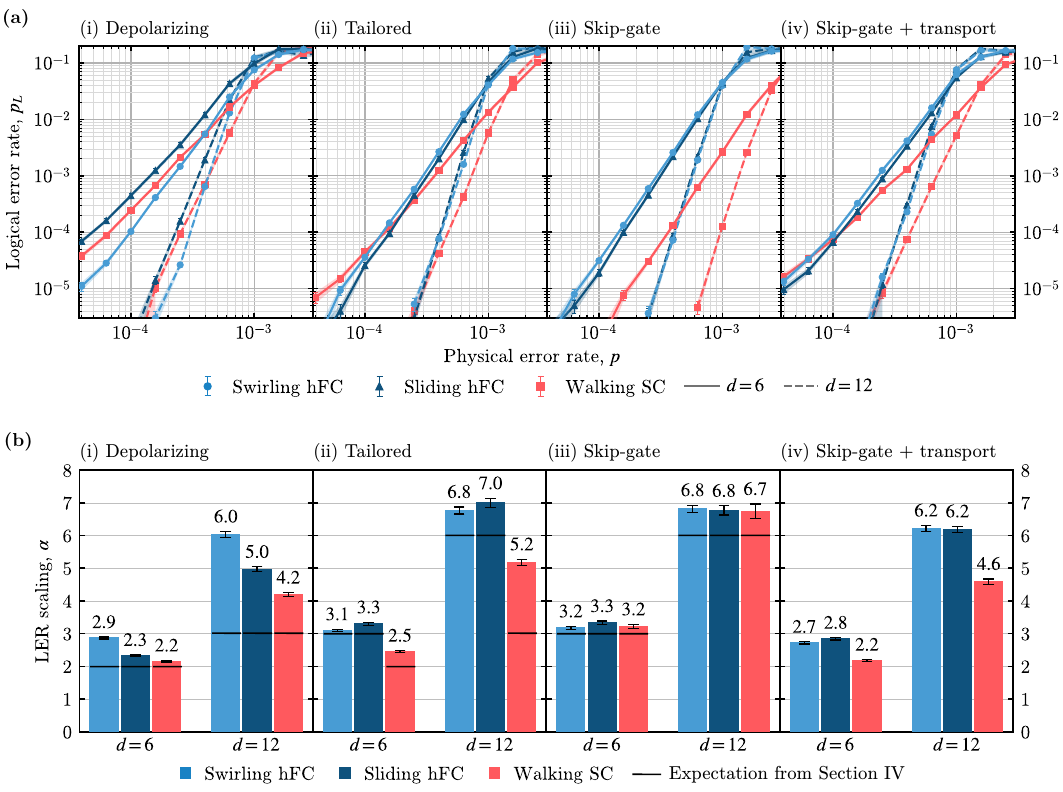}
    \caption{\textbf{Numerical comparison of the swirling and sliding hFCs with the walking surface code} under (i) depolarizing leakage, (ii) tailored leakage, (iii) skip-gate leakage, and (iv) skip-gate leakage with $10\%$ leakage transport.
    Top panels in (a) show the LER per $d$ rounds of error correction ($3d$ QEC rounds simulated), where one QEC round corresponds to three check subrounds for the hFC. The bottom panels in (b) show the scaling exponents $\alpha$ of LERs fitted to $\sim p^\alpha$.}
    \label{fig:bars}
\end{figure*}

\subsection{Minimum weight errors}

As discussed earlier, the asymptotic LER is governed by the minimum malignant fault weight. 
These fault weights have been benchmarked for the walking surface code under leakage noise in Ref.~\cite{pavlovich2026erasuresurfacecodecircuit}.
Thus, to establish a comparison, we aim to determine this value for the swirling and sliding hFC variants. 

Ideally, one would determine the minimum malignant fault weight $w_{\min}$ by exhaustively searching over all possible fault configurations.
The lowest-distance hFC that we may fairly benchmark is $d=6$, as it is the smallest code on a torus providing equal protection to both logicals. 
However, even for $d=6$, such a brute-force search is computationally prohibitive under leakage noise.
In particular, because faults can occur throughout the three-dimensional spacetime volume, the search space grows rapidly with the number of quantum error correction rounds. 

We therefore instead construct adversarial fault configurations and increase the weight until we produce a logical error~\footnote{We construct adversarial leakage fault configurations by placing leakage events along a minimum length homologically nontrivial loop, and incrementally add leakage events to this loop until a decoding failure occurs. 
The resulting fault weight provides an upper bound on the minimum malignant fault weight, rather than a rigorous determination of it, since lower-weight malignant configurations cannot be ruled out. 
We use this bound as a prediction to aid in interpreting the numerical results in the finite, experimentally relevant error rate regime.}. 
The resulting fault weights provide upper bounds on the minimum malignant fault weight $w_{\min}$, and we use these bounds to motivate predictions for the asymptotic LER scaling (summarized in \cref{tab:asymptotic_scaling}).

We observe that the walking hFC variants match the walking SC asymptotic scaling in terms of expected least-weight scaling for depolarizing and skip-gate noise. Further, the hFC variants are expected to double the SC asymptotic scaling under tailored leakage noise, implying that, for the same number of physical qubits, the walking hFC may exhibit asymptotically favorable LER scaling over the walking SC.

We emphasize that these bounds are not rigorous proofs of the minimum weight, as lower-weight malignant configurations cannot be ruled out by this procedure.
Nevertheless, as we will show in \cref{sec:numres}, the numerically observed LER scaling exponents under practical physical error rates exceed even these upper bounds on $w_{\min}$. This difference reflects the contribution of higher-weight malignant configurations and their combinatorial multiplicity in the experimentally relevant waterfall regime.

\begin{table}[t]
    \centering
    \caption{\textbf{Upper bounds on the minimum malignant fault weight.} Weights of adversarially planted leakage fault configurations that produce logical errors for the walking hFC (both  swirling and sliding) and the walking surface code under each leakage model. These weights provide upper bounds on $w_{\min}$ and serve as predictions for the asymptotic LER scaling; they are not proven minimum-weight values.}
    \label{tab:asymptotic_scaling}

    \begin{tabular*}{\columnwidth}{@{\extracolsep{\fill}}lcc}
        \toprule
        \textbf{Noise Model}
        & \shortstack{\textbf{Walking hFC}\\\footnotesize($n = 3d^2$ or\footnote{Sliding or swirling hFC variants, respectively.} $5d^2$)}
        & \shortstack{\textbf{Walking SC}\\\footnotesize($n = 2d^2$)} \\
        \midrule
        Depolarizing
        & $\lceil d/4 \rceil$
        & $\lceil d/4 \rceil$ \\
        Tailored
        & $\lceil d/2 \rceil$
        & $\lceil d/4 \rceil$ \\
        Skip-gate
        & $\lceil d/2 \rceil$
        & $\lceil d/2 \rceil$ \\
        \bottomrule
    \end{tabular*}
\end{table}

\section{Numerical results} \label{sec:numres}

We now numerically compare the swirling and sliding hFC circuits with the walking surface code under the noise models introduced in \cref{fig:leakage-models}. We evaluate the logical error rates (LERs), shown in \cref{fig:bars}(a), and extract their scaling with physical error rate under circuit level noise by fitting the LERs to $\sim p^\alpha$ for each code, distance, and leakage noise model, with the resulting scaling exponents $\alpha$ summarized in \cref{fig:bars}(b). 
For the surface code, we report the LER of its single logical qubit, while for the hFC, we report the LER of one of its two logical qubits, consistent with prior work~\cite{Gidney_2021}. 
See \cref{app-sims} for details on the implementation of these simulations.
These results allow us to assess the finite-error-rate performance of the hFC circuits and compare the observed scaling with the asymptotic predictions of \cref{sec:anadv}.

Let us discuss these results by noise model, beginning with depolarizing noise, shown in \cref{fig:bars}(a)(i) and \cref{fig:bars}(b)(i). As the most unstructured noise model, we showed in the previous section that the asymptotic distance scaling is reduced for both the hFC and the SC in this case (relative to the $\lceil d/2 \rceil$ LER scaling expected under Pauli noise).
However, at practically relevant values of $p$ we observe a notable deviation from this asymptotic behavior: the LER of the swirling hFC (light blue in \cref{fig:bars}) exhibits a steeper scaling with $p$ than that of the SC (red). This indicates that, in this regime, higher-weight malignant fault configurations contribute more significantly to the LER of the swirling hFC than to that of the SC. The sliding hFC (dark blue) likewise exhibits a steeper scaling than the SC, although the improvement is less pronounced than for the swirling variant. 
We hypothesize that this difference may be related to the longer maximum leakage lifetime of the sliding hFC compared with the swirling hFC (see \cref{app:counting}), which could lead to a different distribution of malignant fault configurations.

Let us further examine \cref{fig:bars}(a)(i) by moving from high to low $p$, motivated by progression of experimental improvements in leakage rates over time. 
At high $p$, the larger number of malignant fault configurations in the hFC results in a higher LER than that of the SC, and recent experimental demonstrations may fall in this regime.
As $p$ decreases, however, we observe a crossover below which the swirling hFC outperforms the SC, despite the absence of a corresponding advantage in the asymptotic scaling. 
This improvement arises from the greater contribution of higher-weight fault configurations in the swirling hFC, which leads to a steeper LER scaling in the experimentally relevant waterfall regime.
Importantly, continued improvements in physical error rates may push hardware into this crossover regime.
We expect this behavior to persist at even lower $p$, beyond the range accessible to our simulations, until a point at which both codes recover their asymptotic scaling. 
Some platforms may never reach this asymptotic regime, highlighting the importance of the waterfall regime for practical error correction.

For the tailored noise model shown in \cref{fig:bars}(a)(ii) and \cref{fig:bars}(b)(ii), we again observe LER behavior that differs from the asymptotic predictions.
Recall that our analysis in \cref{sec:anadv} predicts that the hFC has a higher minimum malignant fault weight than the SC.
This would lead us to expect greatly improved LERs and scaling for the hFC compared to the SC under tailored noise.
However, in practice, we find that the observed LER scaling of the SC is only slightly weaker than that of both the swirling and sliding hFCs in the practically relevant regime.
We attribute this notable discrepancy in the SC's numerical scaling to entropic factors. 
At the error rates considered here, the LER of the SC is dominated by contributions from higher-weight errors. 
Consequently, although we do observe a crossover below which the hFC outperforms the SC, this crossover occurs at a lower $p$ than for depolarizing noise.

The results for the skip-gate model, shown in the \cref{fig:bars}(a)(iii) and \cref{fig:bars}(b)(iii), are also illuminating. 
For this highly structured noise model, all codes exhibit little deviation from the expected $\lceil d/2 \rceil$ scaling.
Thus, as expected when we do not have leakage transport, the SC outperforms the hFC due to a fewer number of weight-$\lceil d/2 \rceil$ error mechanisms that produce logical failure.

For purely skip-gate leakage, we therefore do not expect a crossover between the two codes at which the hFC outperforms the SC.
However this conclusion changes markedly in \cref{fig:bars}(a)(iv) and \cref{fig:bars}(b)(iv), when leakage transport is introduced. 
With $10\%$ leakage transport, both codes exhibit reduced LER scaling exponents under the skip-gate noise model, but the degradation is more pronounced for the SC than for the hFC.
Evidently, weight $<\lceil d/2 \rceil$  fault configurations contribute to logical failure in the presence of leakage transport. 
Notably, these configurations affect the SC more than the hFC.
As a result, we again observe a crossover $p$ below which the hFC outperforms the SC.

Across all noise models considered in this work, we find that entropic effects become more pronounced at larger code distances. In particular, the observed scaling at $d=12$ generally deviates more strongly from the expected asymptotic scaling than at $d=6$, as highlighted in \cref{fig:bars}(b), with the effect most clearly observed across all circuits under depolarizing leakage noise. This underscores that finite-error-rate entropic effects can become increasingly important as code distance grows, making asymptotic scaling alone insufficient to characterize the practical performance of these codes.

Overall, these results demonstrate that the practical performance of the hFC relative to the SC is strongly dependent on the structure of the underlying noise. 
Our asymptotic analysis in \cref{sec:anadv} provides upper bounds on the minimum malignant fault weight, yet the observed finite-error-rate scaling meets or greatly exceeds these bounds in every case we study due to contributions from higher-weight fault configurations and their combinatorial multiplicity. 
In particular, these entropic effects can produce regimes in which the hFC outperforms the SC even in cases where it has no predicted asymptotic advantage, highlighting the importance of finite-error-rate behavior when comparing QEC protocols.

\section{Conclusion} \label{sec:concl}

Through the design and analysis of two novel circuits, we have established the advantages the hFC presents over the surface code under leakage noise. 
Specifically, we show that our walking hFC circuits, using zero-overhead leakage reduction gadgets, reset leakage more frequently than the walking SC.
Further, the hFC is subject to a steeper logical error rate scaling under a variety of experimentally-motivated leakage models, resulting in regimes where it outperforms the same-distance SC.
Our discoveries complement Refs.~\cite{dessertaine2026enhancedfaulttolerancephotonicquantum, Gidney_2022}, rounding out previous works that establish regimes where the hFC forms the code of choice over the surface code.

We now begin with forward-looking comments specific to our work. Our analysis is currently conducted for the toric honeycomb lattice.
For practical implementation, it would be desirable to benchmark the corresponding planar code variant with boundaries. 
We note that, since stabilizers at code boundaries are generally lower weight, we expect the scaling to remain the same or improve for these lattices. Additionally, our work and the community at large would greatly benefit from more accurate simulators for low error rates and better LER ans\"atze for leakage noise, for which the splitting method~\cite{bravyi_pra2013, mayer2025rareeventsimulationquantum} cannot be directly used.
Next, it would be interesting to analyze the gain in performance for these circuits when using information from erasure checks or three-state readout for leakage detection~\cite{gu2025erasure}. 
Any potential advantages may be dependent on new accurate decoding strategies to optimally use the additional measurement information ~\cite{liu2026achievingoptimaldistanceatomlosscorrection, pavlovich2026erasuresurfacecodecircuit, Gu_2025}.

We have observed that the weight two check measurements indeed mitigate the spread of leakage. This brings about the question as to whether there exists a dynamical measurement schedule for a subsystem code with low-weight checks that can outperform the walking surface code in even more regimes. For instance, one may consider designing zero-overhead LRU circuits for newer constructions such as that of Refs.~\cite{zen2026lowvalencyscalablequantumerror,alam2026baconshorboardgames}.

More broadly, our results motivate designing leakage-resilient QEC circuits while jointly optimizing circuit structure, leakage effects, connectivity constraints and decoding. Mapping out the Pareto-frontier of this parameter space will help identify favorable architectures under different hardware constraints, and ultimately guide the design of next-generation quantum hardware and fault-tolerant protocols.

\textit{Acknowledgments --} We are grateful to Margaret Pavlovich, Pengyu Liu, Katie Chang, John Garmon, and Benjamin Rodatz for useful discussions. 
We thank the Yale Center for Research Computing for use of the Grace and Bouchet clusters. This work was supported by the Army Research Office (W911NF2410358) and by the National Science Foundation (NSF) Graduate Research Fellowship Program under Grant No. DGE-2139841. Any opinions, findings, and conclusions or recommendations expressed in this publication are those of the authors and do not necessarily reflect the views of NSF. 
\bibliography{references.bib}

\appendix
 \crefalias{section}{appendix} 

\setcounter{equation}{0}
\setcounter{figure}{0}
\setcounter{table}{0}

\renewcommand{\thefigure}{A\arabic{figure}}
\renewcommand{\thetable}{A\arabic{table}}

\newpage

\section{Walking surface code}\label{app:walksc}
In this work, we compare our hFC circuits against the walking surface code, implemented  exactly as in Ref.~\cite{McEwen_2023}. Specifically, we compare against the ``wiggling" variant, which we describe in this section.

The walking surface code can be understood in terms of the LRU circuits introduced in the main text.
In particular, a late-SWAP LRU gadget is applied at the end of each syndrome extraction round. This is illustrated in \cref{fig:walking-sc}(left), where late-SWAPs are represented by solid black arrows indicating the direction of data qubit information flow.
In order to restore the original lattice locations, the direction of the late-SWAPs is reversed every round, as in \cref{fig:walking-sc}(right).

The walking surface code has a qubit lifetime of $8$ gates. To see this, note that the syndrome extraction cycle of the conventional surface code has depth $4$. The addition of a late-SWAP does not increase this depth, and maps data qubit information onto a freshly initialized ancilla qubit at the end of a syndrome extraction round. This qubit then participates in the four gates of the subsequent syndrome extraction round before being measured out. Thus each qubit undergoes $4$ gates in the round in which it acts as an ancilla and $4$ gates in the following round, in which it acts as a data qubit, resulting in a lifetime of $8$ gates. This pattern is uniform across the lattice, up to boundary effects.

\begin{figure}
    \centering
    \includegraphics[width=\linewidth]{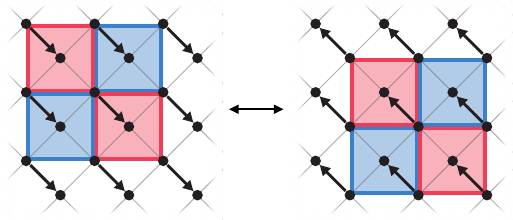}
    \caption{\textbf{The walking surface code}, implemented using late-SWAP LRUs.}
    \label{fig:walking-sc}
\end{figure}

\section{P6 hFC detectors} \label{app:detectors}

In this work, we study the P6 hFC, benchmarked in~\cite{Gidney_2021}. The basics of the P6 hFC were introduced in \cref{sec:prelims}. Here, we will show exactly how the edge measurements at each round are used to perform error correction, starting with a review of the procedure for the standard hFC. We then describe how this procedure is adapted for the swirling and sliding hFC variants introduced in this work.

We begin by reviewing how the stabilizers of the P6 honeycomb Floquet code are inferred from the measurement sequence. 
Recall \cref{eqn:isg}.
Error correction is performed by tracking the outcomes of the plaquette operators therein, each obtained from the product of the six edge measurement outcomes around the corresponding plaquette.

For example, consider a $Z^{\otimes 6}_{\mathbf b}$ plaquette. Its value is inferred by multiplying the outcomes of the $XX_{\mathbf r}$ and $YY_{\mathbf g}$ measurements on the six edges surrounding the plaquette, using the measurement outcomes from rounds $0\bmod3$ and $1\bmod3$, respectively. Thus, the $Z^{\otimes 6}_{\mathbf b}$ plaquettes can be inferred after round $1\bmod3$ using $XX_{\mathbf r}$ and $YY_{\mathbf g}$ check measurement outcomes from the immediately preceding two rounds. Similarly, the $X^{\otimes 6}_{\mathbf b}$ plaquettes are inferred at round $2\bmod3$, and $Y^{\otimes 6}_{\mathbf b}$ plaquettes at round $0\bmod3$.

Changes in the inferred plaquette stabilizer outcomes define the detector events supplied to the decoder. Each detector is therefore the product of two consecutive measurements of the same plaquette stabilizer, separated in time by one QEC cycle. Since each plaquette stabilizer is itself inferred from six edge measurements, a detector consists of the product of $12$ individual edge measurement outcomes. We refer to the spacetime region associated with the set of faults that can affect a given detector as its \textit{detector cell}.

A visual description of these detector cells is given on the left side of \cref{fig:static-detector}. Each detector cell contains the data qubit locations (black circles), whose errors can flip the detector. The corresponding edge measurement sequence and the sequence of inferred plaquette stabilizers are shown on the right, with arrows indicating the positive time direction.

\begin{figure}
    \centering
    \includegraphics[width=.9\linewidth]{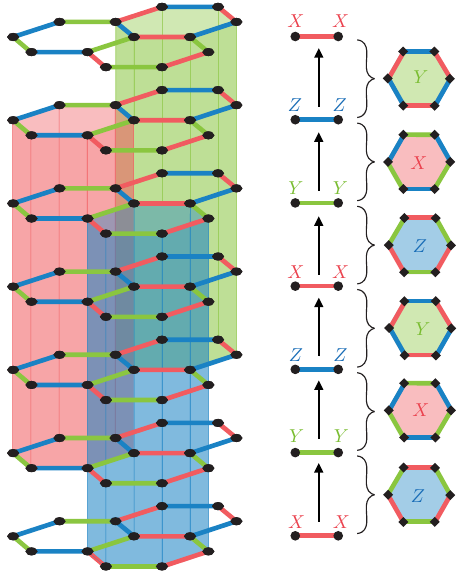}
    \caption{\textbf{Detector cells for the P6 hFC}. The right panel gives the edge measurement type associated with each timestep, with arrows pointing in the positive time direction. The corresponding plaquettes inferred at each timestep are also shown.}
    \label{fig:static-detector}
\end{figure}

In general, a $Z$-type detector is sensitive to $X$- and $Y$-type errors. Errors occurring between the two rounds whose edge measurements are used to infer the corresponding plaquette stabilizers, however, can instead be interpreted as measurement errors and need not have $X$- or $Y$-type support.

For the swirling and sliding variants, the detector construction must be modified to account for the classical updates associated with the late-SWAP LRU. Recall from \cref{fig:late-swap} that the late-SWAP circuit includes a classical update conditioned on the edge measurement outcome. This update can be interpreted as a Pauli frame change. Because the individual edge measurement outcomes are random, a late-SWAP LRU therefore introduces a randomized Pauli frame change. Consequently, the products of edge measurement outcomes used to construct the standard hFC detectors no longer directly yield deterministic detector outcomes.

To account for these frame changes, we begin with the detector cells of the standard hFC shown in \cref{fig:static-detector}. We then identify all classical updates associated with late-SWAP edge measurements adjacent to the detector cell that can flip its detector outcome. The corresponding edge measurement outcomes are incorporated into the detector. If an edge that induces such a frame update is already among the $12$ edge measurements comprising the original detector, its contribution is instead removed.

For both the swirling and sliding hFCs, the resulting detectors remain products of $12$ individual edge measurement outcomes spanning five timesteps, as in the standard hFC. However, the set of edges differs: some of the original edges are removed, while diagonal edges adjacent to the original detector cell may be included. Thus, the late-SWAP-induced Pauli frame updates modify the spacetime support of the detector without changing the total number of edge measurement outcomes entering it.

To preserve the symmetry, the classical updates induced by late-SWAP LRUs follow cyclically with the measurement basis: $XX_{\mathbf r}$ measurements induce $Y$-type updates, $YY_{\mathbf g}$ measurements induce $Z$-type updates, and $ZZ_{\mathbf b}$ measurements induce $X$-type updates, as illustrated by the late-SWAP example circuit in \cref{fig:late-swap}.

\section{Fault tolerance}\label{app:fault-tol}

In this section, we prove that the standard, late-SWAP, and early-SWAP implementations maintain the fault-tolerance of the original hFC circuit. We use the notion of fault equivalence introduced in Ref.~\cite{rodatz2026faulttoleranceconstruction}: two circuits are fault-equivalent if every undetectable fault in one circuit has a corresponding undetectable fault in the other circuit. 
For the purposes of this section, we assume the reader is familiar with the ZX diagrams and notation of Ref.~\cite{rodatz2026faulttoleranceconstruction}.

In Propositions \ref{standard-proof}, \ref{late-swap-proof} and \ref{early-swap-proof}, we show fault tolerance by a sequence of fault-equivalent rewrites. The rewrites follow from Propositions~6.9, 6.10, 6.12, and 6.13 in Ref.~\cite{rodatz2026faulttoleranceconstruction}, together with the OCM rule. In each case, we get that each circuit is fault-equivalent to the fault-tolerant $ZZ$ measurement of Proposition~6.7, and is therefore fault-tolerant. For the benefit of the reader, we reproduce the relevant propositions for fault-equivalent rewrites from Ref.~\cite{rodatz2026faulttoleranceconstruction} below:
\begin{figure}[ht]
    \centering
\includegraphics[width=\linewidth]{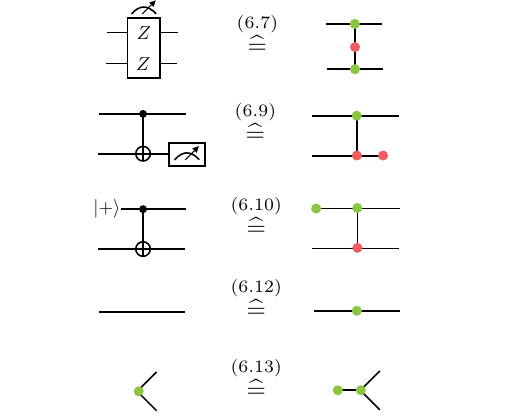}
\end{figure}

In addition to these propositions, we use the Only Connectivity Matters (OCM) rule, which enables us to arbitrarily bend the wires of ZX diagrams provided that the system inputs, outputs, and connectivity of all nodes remains unchanged.

\begin{proposition}\label{standard-proof}
The \textbf{standard} syndrome extraction circuit measuring $ZZ$ for the hFC is fault-tolerant.
\begin{figure}[ht]
    \centering
    \includegraphics[width=\linewidth]{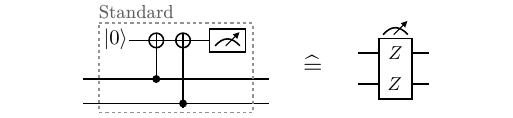}
\end{figure}
\end{proposition}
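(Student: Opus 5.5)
The plan is to show that the standard circuit of \cref{fig:standard-sec}(a) is fault-equivalent to the fault-tolerant $ZZ$ measurement of Proposition~6.7 of Ref.~\cite{rodatz2026faulttoleranceconstruction}. We use only fault-equivalent rewrites: Propositions~6.9, 6.10, 6.12 and 6.13, together with the OCM rule. Since fault equivalence preserves the set of undetectable faults up to correspondence, fault tolerance then follows immediately.

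\emph{Translation step.} First write the circuit as a ZX diagram. The ancilla is reset in $\ket{0}$, which is a single-legged X spider. Each CX from a data qubit (control) to the ancilla (target) is a Z spider on the data wire joined to an X spider on the ancilla wire. The final $Z$-basis measurement of the ancilla is an X spider carrying the outcome phase $m\pi$. Every edge of this diagram is a potential fault location, and we track these edges through each rewrite rather than merely preserving the linear map.

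\emph{Rewrite steps.} Along the ancilla wire we have a chain of X spiders: initialization, two CX targets, and measurement. We fuse these, using whichever of Propositions~6.9--6.13 asserts that fusing spiders connected along an idle wire is fault-equivalent, to obtain a single X spider. That spider is connected to one Z spider on each data wire and carries the phase $m\pi$. Using OCM we then bend the wires into the canonical layout of the Proposition~6.7 gadget, without changing connectivity, inputs or outputs. Finally we compare the result to that gadget and check that the two diagrams coincide.

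\emph{Main obstacle.} The delicate step is the spider fusion, since fusing spiders is not in general fault-equivalent. A Pauli fault on an internal ancilla edge, between the two CX gates, is a hook-type fault that spreads to a weight-one data error plus a possible measurement flip. Because the check has weight two, every such internal fault must be shown equivalent to a fault of weight at most one on the fused diagram. Concretely, an $X$ fault on the ancilla between the two CXs should be equivalent to a measurement flip, and a $Z$ fault there should be equivalent to a single $Z$ on one data qubit. I would verify this by enumerating the three or four internal edges and pushing each Pauli through the adjacent spiders. This is exactly the content I expect Propositions~6.10/6.12 to certify for spiders of degree at most three. The remaining work is checking that the degrees stay within the hypotheses of those propositions.
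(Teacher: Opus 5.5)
Your proposal follows the paper's route: you write the standard circuit as a ZX diagram and apply fault-equivalent rewrites (Props.~6.9--6.13 of Ref.~\cite{rodatz2026faulttoleranceconstruction} together with OCM) until it coincides with the fault-tolerant $ZZ$ measurement of Prop.~6.7. Your explicit check of the ancilla edge between the two CX gates ($X \mapsto$ outcome flip, $Z \mapsto$ a single data $Z$, while the edges next to preparation and measurement contribute only trivial $Z$'s or outcome flips) is what makes the fusion fault-equivalent for a weight-two check, so the argument holds even without pinning down which cited proposition licenses each individual rewrite.
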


\begin{proof}
\begin{figure}[ht]
    \centering
    \includegraphics[width=\linewidth]{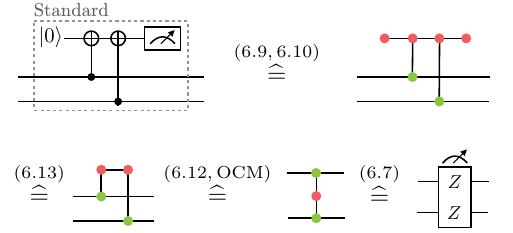}
\end{figure}
 
\end{proof}

\begin{proposition}\label{late-swap-proof}
The \textbf{late-SWAP} syndrome extraction circuit measuring $ZZ$  for the hFC is fault-tolerant.
\begin{figure}[H]
    \centering
    \includegraphics[width=\linewidth]{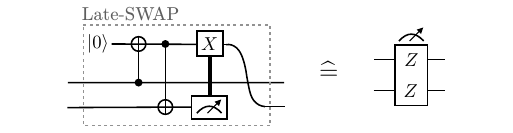}
\end{figure}
\end{proposition}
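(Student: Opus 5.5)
The plan is to mirror the proof of Proposition~\ref{standard-proof}: we exhibit a chain of fault-equivalent rewrites, in the sense of Ref.~\cite{rodatz2026faulttoleranceconstruction}, that takes the late-SWAP circuit to the fault-tolerant $ZZ$ measurement of Proposition~6.7. The argument then closes by transitivity of fault equivalence.

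\emph{Step 1: translate the circuit.} The compiled circuit of \cref{fig:late-swap} consists of the following pieces:
\begin{itemize}[leftmargin=*,label={}]
\item \textbf{Preparation:} the ancilla is reset.
\item \textbf{Entangling gates:} two CX gates, from data-$a$ and data-$b$ onto the ancilla, one of which has been merged with the SWAP via \cref{fig:identities}(a).
\item \textbf{Readout:} a final measurement on the wire that was originally data-$b$, followed by a classically controlled Pauli correction.
\end{itemize}
We first write this as a ZX diagram. The reset becomes a $Z$-spider state and each CX becomes a $Z$-spider/$X$-spider pair joined by an edge. The measurement becomes an $X$-spider carrying a measurement-outcome phase, and the classically controlled correction becomes a phase on the output wire.

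\emph{Step 2: recognise the SWAP as wiring.} The key observation is that in ZX a SWAP is pure wiring. By the OCM rule, the role exchange between the ancilla and data-$b$ is therefore just a relabelling of which output leg leaves the diagram where. We bend the wires so that the spider connectivity matches the diagram of the standard circuit from Proposition~\ref{standard-proof}, with only the output assignment changed. Because OCM preserves inputs, outputs and node connectivity, it is trivially fault-equivalent, provided we also track the locations of faults on each wire segment.

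\emph{Step 3: reduce to the standard case.} Next we fuse adjacent same-colour spiders and remove the classically controlled correction. This uses the fault-equivalent rewrites of Propositions~6.9--6.13 in Ref.~\cite{rodatz2026faulttoleranceconstruction}, in the same order as in the proof of Proposition~\ref{standard-proof}. For the correction in particular, we treat the measurement-conditioned Pauli as a Pauli frame update, which can be pushed into the classical outputs and does not change the set of undetectable faults. This should bring the diagram to exactly the fault-tolerant $ZZ$ measurement gadget of Proposition~6.7, possibly up to a relabelling of outputs, which is harmless.

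\emph{Main obstacle.} The delicate step is the compiled form produced by \cref{fig:identities}(c), in which a CX followed by a measurement has been replaced by a classically controlled gate. Fault locations in the original CX+measurement are not obviously in bijection with those of the compiled circuit, and we must check that no new undetectable weight-one fault appears. In particular, a single fault on the wire that now carries data after the swap must not be invisible to the measurement. I would first verify directly that every single-qubit Pauli fault on each edge of the compiled diagram has a counterpart of no greater weight in the Proposition~6.7 gadget. This means enumerating faults on the few internal edges, using stabiliser (Pauli-web) flow to see which faults flip the measurement outcome. Only after that check would I invoke Propositions~6.12--6.13 to carry out the spider unfusion and fusion.
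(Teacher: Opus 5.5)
Your framework is the paper's: write the gadget as a ZX diagram, then apply fault-equivalent rewrites (OCM and Propositions 6.9--6.13 of Rodatz et al.) until you reach the Proposition 6.7 measurement, with the classical feed-forward taken to be fault-free. The gap is Step~2, which is where you put the real work.

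The proposition concerns the compiled zero-overhead circuit, and that circuit contains no SWAP. Write the SWAP as $\mathrm{CX}_{b\to\mathrm{anc}}\mathrm{CX}_{\mathrm{anc}\to b}\mathrm{CX}_{b\to\mathrm{anc}}$. The original $\mathrm{CX}_{b\to\mathrm{anc}}$ cancels against the first factor, and identity (c) turns the last factor together with the $Z$ readout of $b$ into $X^{m}$ on the ancilla. The circuit is therefore: prepare $\ket{0}_{\mathrm{anc}}$, apply $\mathrm{CX}_{a\to\mathrm{anc}}$ and $\mathrm{CX}_{\mathrm{anc}\to b}$, measure $b$, and apply $X^{m}$ to the ancilla. (Your Step~1 has the second CX pointing the wrong way, and $\ket{0}$ is an $X$-spider, not a $Z$-spider.)

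In this diagram the ancilla wire carries a $Z$-spider whose other neighbours are the $X$-spider on wire $b$ (fused with the readout effect $X(m\pi)$) and the output leg carrying $X(m\pi)$. No bending of wires turns this into the standard graph with relabelled outputs. So OCM cannot absorb the role exchange, and ``the same order as in the standard proof'' has nothing to act on. If you meant Step~2 for the uncompiled circuit (standard SEC followed by an explicit SWAP), it is correct but proves the claim for a different circuit. The compilation by (a) and (c), which is the actual content here, is then left to an enumeration you never carry out.

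What is needed is a direct rewrite of the compiled diagram. First fuse $\ket{0}$ into the target of $\mathrm{CX}_{a\to\mathrm{anc}}$. This leaves a phase-free arity-2 $X$-spider, i.e.\ a plain wire, so the data-$a$ $Z$-spider is joined by a single edge to the ancilla $Z$-spider. Then $\pi$-copy the two $X(m\pi)$'s on the ancilla spider's other two legs onto that edge. This gives the $Z$--$X(m\pi)$--$Z$ form of a $ZZ$ measurement, with data $b$'s output now on the ancilla wire.

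Each move must be licensed by one of the cited fault-equivalent rules, since ZX rewrites such as unfusion need not preserve fault weights. Moving $X^{m}$ is legitimate only because it contributes no fault locations of its own; this is exactly where the paper's assumption of error-free classical processing enters. The correction also cannot simply be ``removed'': without it the map is not the $ZZ$ projector.

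Finally, the worry in your last paragraph is misplaced. A fault on the new data wire being invisible to this readout is harmless, since output faults are invisible in the reference gadget too. What must be checked is that the feed-forward never raises weight:
\begin{itemize}
\item An $X$ just before the readout of $b$ flips the recorded $m$ and, through $X^{m}$, also puts $X$ on the new data qubit. This equals the single fault $X$ on data $b$'s input in the standard measurement.
\item An $X$ on the ancilla between the two gates combines with $X^{m}$ to give a pure readout flip, again a single fault.
\end{itemize}
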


\begin{proof}
As is standard in fault-tolerant quantum computation, we assume that classical processing is error-free and does not contribute additional fault locations.
\begin{figure}[H]
    \centering
    \includegraphics[width=\linewidth]{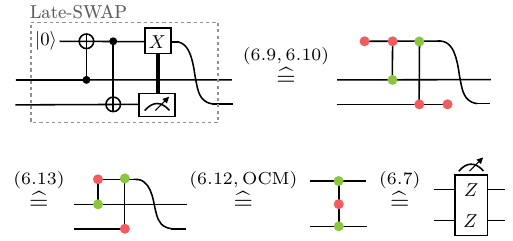}
\end{figure}
 
\end{proof}

\begin{proposition}\label{early-swap-proof}
The \textbf{early-SWAP} syndrome extraction circuit measuring $ZZ$ for the hFC is fault-tolerant.
\begin{figure}[H]
    \centering
    \includegraphics[width=\linewidth]{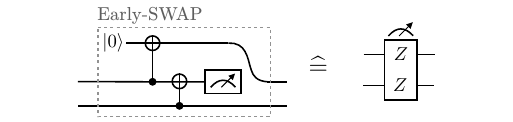}
\end{figure}
\end{proposition}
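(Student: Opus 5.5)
Following the pattern of Propositions~\ref{standard-proof} and \ref{late-swap-proof}, I will exhibit a chain of fault-equivalent rewrites from the early-SWAP syndrome extraction circuit of \cref{fig:early-swap} to the fault-tolerant $ZZ$ measurement of Proposition~6.7 of Ref.~\cite{rodatz2026faulttoleranceconstruction}. By fault equivalence, fault tolerance of the target then transfers to the early-SWAP circuit.

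The first step is to translate the reduced early-SWAP circuit into a ZX diagram. This circuit has a fresh ancilla, a CX from the swapped data qubit $b$ onto the ancilla, and a CX between the remaining data qubit $a$ and the physical qubit that now carries $b$'s information, followed by measurement of the old data-$b$ wire. Each CX becomes a Z-spider (control) joined to an X-spider (target). The $\ket{0}$ preparation becomes a single-legged X-spider, and the $Z$-basis measurement becomes a Z-spider with a measurement-outcome leg. The key observation is that, once the wires are relabeled by the OCM rule, the diagram has the same connectivity as the standard circuit. The SWAP only exchanges which physical wire is called ``output $b$'' and which is called ``ancilla''. Because the early-SWAP reduction relies on the \emph{trivial SWAP} identity of \cref{fig:identities}(b) rather than a classically controlled correction, no classical Pauli frame update is needed. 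This makes the argument simpler than the one for Proposition~\ref{late-swap-proof}.

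Next, I will apply the same fault-equivalent rewrites as in the standard case. Propositions~6.9 and 6.10 fuse or unfuse the Z-spiders along the data-$a$ and data-$b$ wires and the X-spider on the measured wire. Propositions~6.12 and 6.13 then handle the interaction between the parity-collecting X-spider and the measured Z-spider. Together these collapse the diagram to the canonical two-legged-input, two-legged-output Z-spider pair joined through an X-spider with a measurement leg, as in Proposition~6.7. OCM is used in between to move the output-$b$ leg onto the wire that was physically the ancilla.

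The main obstacle is this OCM bending. It is fault-equivalence-preserving only if the bent wire segments do not create new fault locations whose counterparts in the target circuit have a smaller undetectable weight. In the early-SWAP circuit, a fault on the data-$b$ wire between the first CX and the measurement propagates differently than in the standard circuit. I would first check explicitly that each such single fault maps either to a detected fault or to an equivalent single fault on the output or measurement leg of the Proposition~6.7 diagram. If direct bending is problematic, I would instead unfuse a spider with Proposition~6.9 before bending, so that every intermediate edge keeps a counterpart fault location.
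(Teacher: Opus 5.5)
Your overall route is the paper's. It is a chain of fault-equivalent rewrites (the cited Propositions~6.9, 6.10, 6.12, 6.13 plus OCM) ending at the fault-tolerant $ZZ$ measurement of Proposition~6.7. You are also right that, unlike the late-SWAP case, no assumption on classical processing is needed.

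The gap is in what you call the key observation. The reduced circuit prepares the fresh qubit $c$ in $\ket{0}$, applies $\mathrm{CX}_{b\to c}$, then applies $\mathrm{CX}_{a\to b}$ onto the \emph{old} data-$b$ wire and measures that wire. The qubit $c$ becomes the new data-$b$ qubit, which is why the appendix counts one gate for $a_e$ and two for $d_e$. Your phrase ``the physical qubit that now carries $b$'s information'' must therefore mean the old $b$ wire. If the second CX hit $c$, measuring the old $b$ wire would return $Z_b$, not $Z_aZ_b$.

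In ZX form the circuit has four pieces:
\begin{itemize}
\item $Z_a$ with legs $\mathrm{in}_a,\mathrm{out}_a,f$;
\item $Z_b$ with legs $\mathrm{in}_b,g,h$;
\item a three-legged X-spider on the old $b$ wire, attached to $f$, $h$ and the measurement;
\item on $c$, the $\ket{0}$ X-spider joined to a target X-spider with legs $g$ and $\mathrm{out}_c$.
\end{itemize}
The standard circuit instead has the chain $\ket{0}$--target--target--measurement on the ancilla, and no X-spider there is adjacent to both $Z_a$ and $Z_b$. These labelled graphs are not related by OCM. They coincide only after spider fusion: absorbing $\ket{0}$ into the target on $c$ and deleting the resulting identity, and, on the standard side, fusing the two ancilla targets. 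So ``apply the same rewrites as in the standard case'' is not available as stated, because the two diagrams genuinely differ before fusion.

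Consequently the fault-equivalence content lies in those fusions, not in the bending. OCM leaves the edge set untouched, so it can neither create nor delete fault locations and is automatically fault-equivalent; your final paragraph worries about the wrong move. Fusion, by contrast, deletes an edge, and in general a weight-one fault on that edge can then only be reproduced by a higher-weight fault. That is exactly where the cited propositions must be invoked.

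Here they go through because each spider has at most three legs and $\ket{0}$ absorbs $Z$:
\begin{itemize}
\item an $X$ fault on the deleted preparation edge equals $X$ on $\mathrm{out}_c$, and a $Z$ fault there is trivial;
\item faults on $g$ push to $\mathrm{out}_c$;
\item faults on the old-$b$ segment $h$ act as a measurement flip and/or $Z$ on $\mathrm{out}_a$ (equivalently on $\mathrm{out}_c$).
\end{itemize}
Each of these is matched by a weight-one fault of the Proposition~6.7 diagram. Once you write the diagram out this way and justify these fusion steps, your argument becomes the paper's proof.
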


\begin{proof}
\begin{figure}[H]
    \centering
    \includegraphics[width=\linewidth]{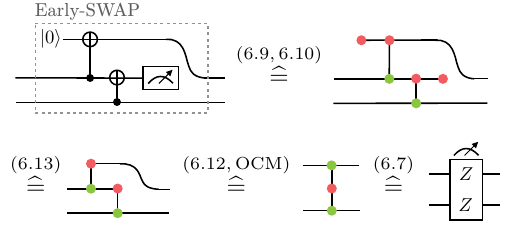}
\end{figure}
 
\end{proof}

\begin{proposition}
Edge measurements of all types ($XX$, $YY$, $ZZ$) implemented via the standard, early-SWAP, or late-SWAP circuits are fault tolerant.
\end{proposition}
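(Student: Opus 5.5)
The plan is to reduce the $XX$ and $YY$ cases to the $ZZ$ case already handled by Propositions~\ref{standard-proof}, \ref{late-swap-proof} and \ref{early-swap-proof}. The reduction uses local single-qubit Clifford conjugation. For $P \in \{X, Y\}$, fix a single-qubit Clifford $C_P$ with $C_P Z C_P^\dagger = P$; take $C_X = H$, and for $Y$ take $C_Y = S H$ (up to conventions). The $PP$ measurement circuit, in any of the three variants, is then obtained from the corresponding $ZZ$ circuit by conjugating each data-qubit wire with $C_P$. Concretely, we place $C_P^\dagger$ on each data input before the circuit and $C_P$ on each data output after it, or equivalently replace each CX by the controlled-$P$ gate with the data qubit as control. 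The ancilla wire, the ancilla reset/measurement, and the classical update are unchanged, except that the update's Pauli is conjugated accordingly. For late-SWAP this yields the cyclic update rule stated in \cref{app:detectors}.

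The key step is to show that conjugating the input and output legs of a fault-tolerant gadget by single-qubit Cliffords preserves fault tolerance in the sense of Ref.~\cite{rodatz2026faulttoleranceconstruction}. I would argue this directly from the definition of fault equivalence. A single-qubit Clifford $C$ maps Pauli faults to Pauli faults bijectively and preserves their weight. So any fault on a leg of the conjugated circuit can be pushed through $C$ to an equal-weight fault on the corresponding leg of the $ZZ$ circuit, with the same effect on detectors and on the logical action up to the fixed relabeling $Z\mapsto P$. If we instead treat $C_P$ as an ideal frame change (a basis relabeling of the data qubits), it contributes no fault locations at all. In ZX language, the $PP$ diagram is the $ZZ$ diagram with Hadamard boxes (and $\pi/2$ phases for $Y$) on the boundary legs. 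Fault equivalence is compatible with composing both sides by the same Clifford boundary maps: undetectable faults correspond one-to-one with weight preserved. The target specification, the fault-tolerant $ZZ$ measurement of Proposition~6.7, is likewise carried to the ideal $PP$ measurement.

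Combining these pieces, each of the nine circuits (three bases times three variants) is fault-equivalent to the ideal $PP$ measurement, which gives the proposition. I expect the main obstacle to be bookkeeping rather than conceptual. First, we must check that the physical $PP$ circuits actually used in the hFC coincide with the conjugated $ZZ$ circuits, including the placement of the swap in the late-SWAP and early-SWAP variants. Second, we must track which data qubit is swapped and that the conjugating Clifford travels with the data information onto the former ancilla. Third, the gates $C_P$ must not be counted as new fault locations. If the compiled circuits differ from the conjugated ones (e.g.\ using native CP gates rather than CX with basis changes), I would first rewrite the CP gates as $C_P$-conjugated CX within ZX. This is a fault-equivalent rewrite when the conjugations are absorbed into the adjacent wires via the OCM rule. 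After that, the argument above applies.
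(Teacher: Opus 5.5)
Your proposal is correct and follows the paper's proof: the $XX$ and $YY$ cases are reduced to the three $ZZ$ propositions by conjugating the data wires with single-qubit Cliffords, which map Pauli faults bijectively to Pauli faults of equal weight and so preserve fault equivalence. Your bookkeeping about the Clifford travelling with the swapped data qubit is a sensible elaboration that the paper leaves implicit. One side remark is slightly off: with $C_X = H$ the late-SWAP update for $XX$ becomes $Z$-type, not the paper's cyclic $Y$-type, which instead needs a Clifford with $Z\mapsto X$, $X\mapsto Y$ (e.g.\ $HS^\dagger$). This is only a convention choice and does not affect fault tolerance.
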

\begin{proof}
From \ref{standard-proof}, \ref{late-swap-proof} and \ref{early-swap-proof}, we have that $ZZ$ measurements implemented via standard, early-SWAP, or late-SWAP circuits are fault-tolerant.
Arbitrary two-body Pauli measurements can be implemented by conjugating the measurement with single-qubit Clifford gates. Since Clifford conjugation maps Pauli faults to Pauli faults without changing their weight, this preserves fault equivalence. Thus, all edge measurements in the hFC implemented via the standard, late-SWAP, or early-SWAP circuits are fault-tolerant.
\end{proof}

\section{More LRU-hFC circuits}\label{app:allhfcs}

In \cref{sec:circs}, we introduced the two main zero-overhead leakage-reducing circuits for the hFC studied in this work: the swirling and sliding hFCs. Here, we introduce three additional walking hFC variants. First, we modify the swirling hFC to enable \textit{pipelining}, which can be advantageous on architectures where measurement operations are comparatively slow, and we do so while retaining the zero-overhead character of the swirling hFC. We then relax the zero-overhead constraint and consider two additional constructions that maintain the same circuit depth while allowing additional gates and ancilla qubits in order to be compatible with fixed qubit connectivity. The \textit{wobbling hFC} is implemented on nearest-neighbor square connectivity, while the \textit{teetering hFC} is tailored to a heavy-hex lattice.

\subsection{Pipelining the swirling hFC}
In some settings, it may be advantageous to introduce \textit{pipelining}, in which successive syndrome extraction operations are initiated before the preceding operations have completed. This is particularly natural on neutral-atom architectures, where measurement operations can be relatively slow. In this setting, one can delay all measurement operations until the end of a full QEC round (three edge measurement subrounds) and perform these measurements concurrently. This comes at the cost of additional qubit overhead: as in the swirling hFC, $5d^2$ qubits are required, compared with $3d^2$ for the sliding hFC, since an ancilla must be available for every edge of the honeycomb lattice.

It is desirable to minimize the time between successive measurements by pipelining the intervening syndrome extraction operations. Ideally, the first entangling gate of a subsequent syndrome extraction round is performed concurrently with the second entangling gate of the current round, rather than waiting for the current round to complete. Disregarding single-qubit Hadamard gates, this achieves the minimum possible separation between successive syndrome extraction operations.

To enable this pipelining in the swirling hFC while retaining the property that data qubits are reset every other round, we alternate between the late-SWAP and early-SWAP implementations. Specifically, the late-SWAP circuit is used in even rounds and the early-SWAP circuit in odd rounds. With this arrangement, the first entangling gate in every edge measurement acts on the gray qubit, while the second acts on the black qubit. Consequently, black qubits are reset every even round and gray qubits every odd round, as in the standard swirling hFC, while the syndrome extraction operations can be naturally pipelined to achieve the optimized circuit depth.

Furthermore, from \cref{tab:sliding-counting}, we see that all physical qubits experience only the sequences of roles labeled B or C. Thus, as in the standard swirling hFC, every physical qubit has a consistent lifetime of 4 gates.

\subsection{Wobbling hFC}
Thus far, we have considered only \textit{zero-overhead} approaches to the walking hFC, in which no additional gates, gate layers, or qubits are required relative to the standard hFC circuit. We now relax this constraint and consider constructions that maintain the same circuit depth while allowing additional gates and ancilla qubits during each edge measurement. This tradeoff enables shorter qubit lifetimes than are possible in the zero-overhead setting.

We first introduce the \textit{wobbling hFC}, which is implemented on nearest-neighbor square connectivity. We then introduce the \textit{teetering hFC}, which is tailored to a heavy-hex connectivity lattice.

The \textit{wobbling hFC} uses two new syndrome extraction circuits, each designed for a different local connectivity between the two data qubits being measured and the two ancilla qubits onto which their information is transferred. We call these circuits edge teleportation circuits (ETCs), because the information of both data qubits (the entire edge) is being transferred to fresh ancilla. The first, \textit{ETC-A}, shown in \cref{fig:app-late-swap-teleport}, is identical to the late-SWAP SEC, except that the data qubit not involved in the late-SWAP is teleported to a freshly initialized ancilla. As in the late-SWAP circuit of \cref{fig:late-swap}, these SWAPs can be compiled into the circuit without additional gate depth. The resulting circuit simultaneously performs the edge measurement and transfers the information from both data qubits onto freshly initialized ancilla qubits. Relative to standard syndrome extraction, ETC-A requires one additional ancilla qubit and one additional two-qubit gate, but no additional gate layers.

\begin{figure}[ht]
\centering
\includegraphics[width=\columnwidth]{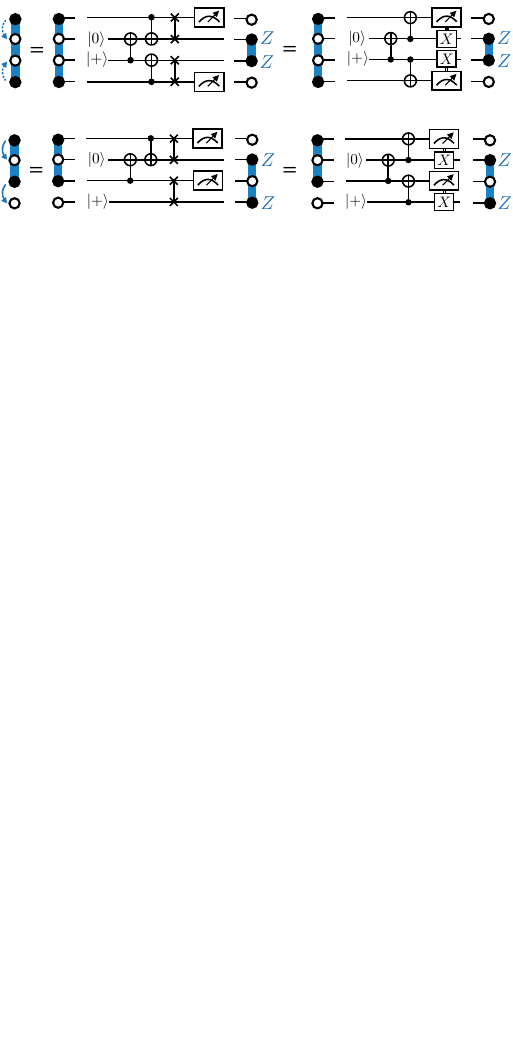}
\caption{\textbf{Edge teleportation circuit A.} We show the circuit implementation of ETC-A. The first equality is a schematic representation of the circuit, where just before measurement a SWAP is applied between each ancilla/data pair. The right-hand side employs~\cref{fig:identities}(a) and (c), yielding the final circuit. In general, ETC-A exchanges the roles of data and ancilla, with the solid, colored arrow indicating the direction of the data qubit information transfer.}
\label{fig:app-late-swap-teleport}
\end{figure}

The wobbling hFC also uses a second syndrome extraction circuit, which we call \textit{ETC-B}. This circuit is constructed from a Shor-style syndrome extraction circuit in which the two ancilla qubits are initialized in a cat state. After entangling the ancillae with the data qubits, a SWAP is applied between each data-ancilla pair immediately before measurement. Thus, the information from both data qubits is transferred onto freshly initialized ancilla qubits. The circuit implementing this procedure is shown in \cref{fig:app-shor-circuit-late-swap}.

\begin{figure}[ht]
\centering
\includegraphics[width=\columnwidth]{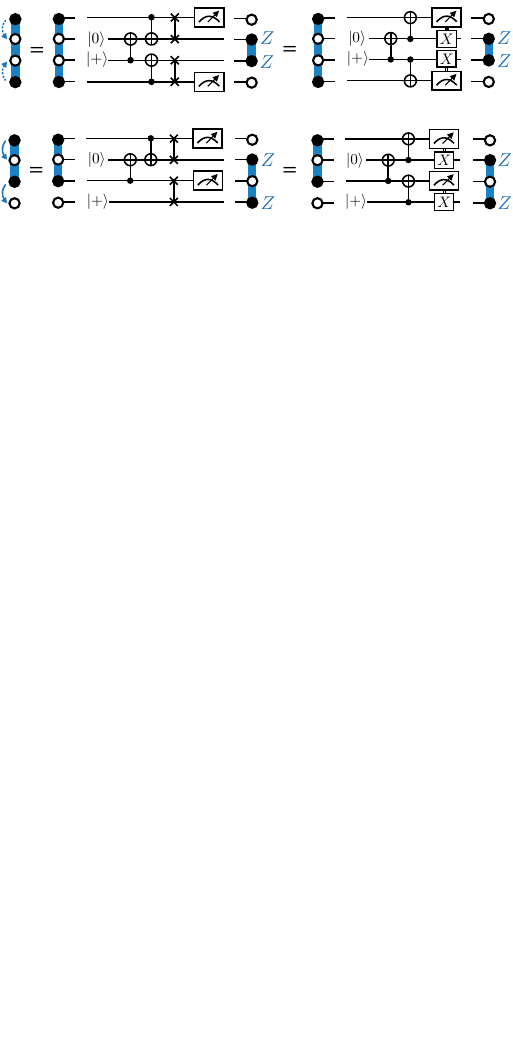}
\caption{\textbf{Edge teleportation circuit B.} We show the circuit implementation of ETC-B. The first equality is a schematic representation of the circuit, where just before measurement a SWAP is applied between each ancilla/data pair. The right-hand side employs~\cref{fig:identities}(a) and (c), yielding the final circuit. In general, ETC-B exchanges the roles of data and ancilla, with the dashed, colored arrow indicating the direction of the data qubit information transfer.}
\label{fig:app-shor-circuit-late-swap}
\end{figure}

\begin{figure*}[t]
    \centering
    \includegraphics[width=.8\textwidth]{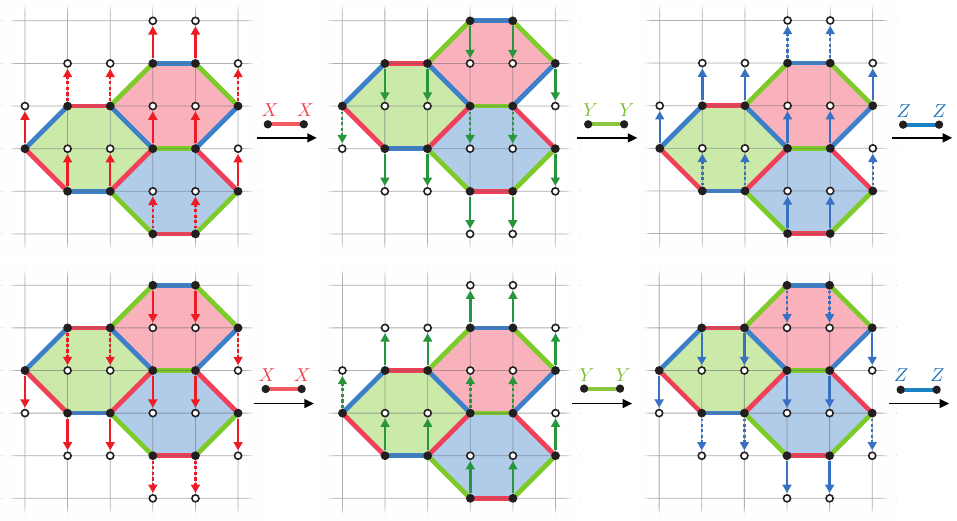}
    \caption{\textbf{Schedule of edge teleportation circuits A and B in the wobbling hFC}, which is compatible with fixed-qubit architectures with square-connectivity. In the patch, solid colored arrows indicate ETC-A, while dashed colored arrows indicate ETC-B. The choice of ETC is assigned according to the available local connectivity of data and ancilla qubits, with arrows indicating the direction of data qubit information transfer.
    Note that in any given panel, colored edges only represent the support of  $PP_\mathbf{c}$ operators, i.e., the square connectivity (gray grid) does not support connections between the qubits at the endpoints of diagonal edges.}
    \label{fig:wobbling}
\end{figure*}

As illustrated by the circuit diagrams in \cref{fig:app-shor-circuit-late-swap,fig:app-late-swap-teleport}, the two circuits require different local connectivity between the data and ancilla qubits. This allows the circuits to be strategically arranged on a nearest-neighbor square architecture. The resulting construction is shown in \cref{fig:wobbling}. At each edge measurement subround, either ETC-A or -B is used on each edge, depending on the available local connectivity of the data and ancilla qubits. Because both circuits transfer the information from both data qubits to freshly initialized ancillae, all data qubit information is transferred at every subround. The resulting patch therefore ``wobbles'' back and forth between the two lattice configurations from one subround to the next.

\subsection{Teetering hFC}

\begin{figure*}[t]
    \centering
    \includegraphics[width=.8\textwidth]{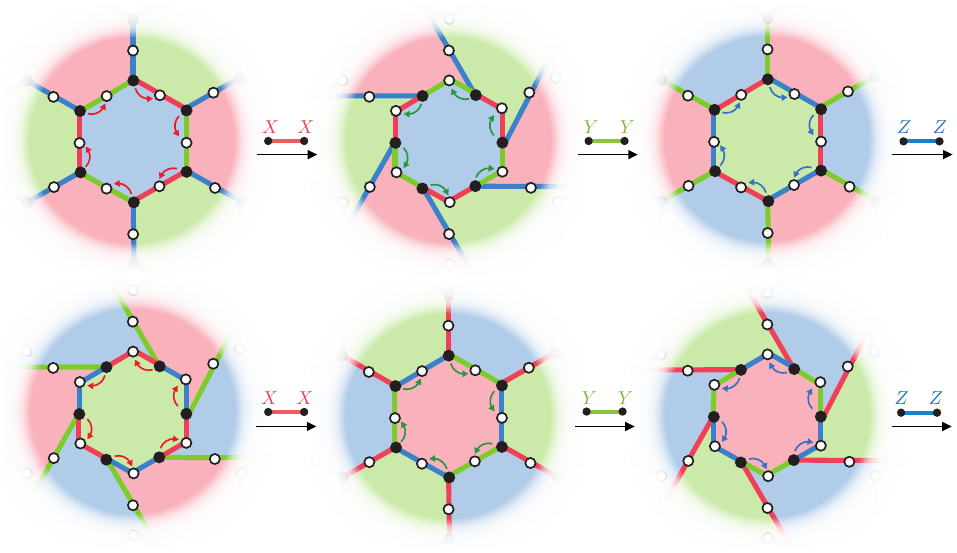}
    \caption{\textbf{Schedule of edge teleportation circuits in the teetering hFC}, which is compatible with fixed-qubit architectures with heavy-hex connectivity. In the patch, solid colored arrows indicate ETC-A, with arrows indicating the direction of data qubit information transfer.}
    \label{fig:teetering}
\end{figure*}

The second construction in this relaxed setting is the \textit{teetering hFC}, which likewise maintains the circuit depth of the standard hFC while allowing additional gates and ancilla qubits. Unlike the wobbling hFC, the teetering hFC uses only ETC-A and is tailored to a fixed heavy-hex nearest-neighbor connectivity.

In each subround of the teetering hFC, all data qubit information is transferred onto freshly initialized ancilla qubits. This is achieved on the heavy-hex layout through a sequence of alternating transfers around each plaquette. At round $0\bmod6$, we consider the blue plaquettes, whose edges are red and green. We first measure $XX_{\mathbf r}$ using ETC-A, transferring the data-qubit information clockwise to neighboring ancilla qubits around each plaquette. In the following round, $1\bmod6$, we measure $YY_{\mathbf g}$ using ETC-A and simultaneously transfer the data qubit information counterclockwise, returning it to its original lattice locations.

At this point, the blue plaquettes have ``teetered'' from one configuration to the other and back, restoring the original lattice. We then repeat the same procedure for the green plaquettes, measuring their blue and red edges in the subsequent two measurement rounds. The procedure continues cyclically, ultimately giving a period of six. The complete sequence is summarized in \cref{fig:teetering}.

\section{Detailed counting of qubit lifetimes}\label{app:counting}

In \cref{sec:qublifetimes}, we counted the qubit lifetime of every physical qubit in the swirling hFC. Here, we perform the same analysis for the sliding hFC, where the calculation is somewhat more involved. We begin by labeling the roles played by physical qubits in the late-SWAP and early-SWAP circuits, shown in \cref{fig:sliding-counting}. In each round of the sliding hFC, every physical qubit plays a role in either the late-SWAP or early-SWAP circuit. More precisely, following a reset, each qubit first serves as an ancilla ($a_l$ or $a_e$ in \cref{fig:sliding-counting}). In the following round, it becomes the data qubit that is not involved in the implicit SWAP ($d_0$), and in the subsequent round it becomes the data qubit that is swapped with the ancilla ($d_l$ or $d_e$).

\begin{figure}
\centering
\includegraphics[width=\columnwidth]{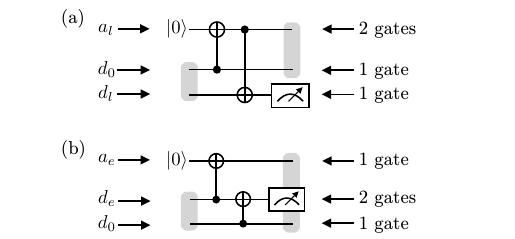}
\caption{\textbf{Counting the qubit lifetime for the sliding hFC.}
(a) Late-SWAP circuit for a $ZZ$-edge measurement. Data qubits are labeled $d_0$ and $d_l$, with the ancilla $a_l$ swapping with the $d_l$ qubit during the circuit. On the right hand side are the gate counts for each role that a physical qubit may play in the circuit ($a_l$, $d_0$, $d_l$). (b) shows the same for the early-SWAP circuit.}
\label{fig:sliding-counting}
\end{figure}

As shown on the right-hand side of \cref{fig:sliding-counting}, the number of gates experienced by a physical qubit depends on the role it plays in each circuit. Ancilla qubits experience either $2$ gates in the late-SWAP circuit ($a_l$) or $1$ gate in the early-SWAP circuit ($a_e$). The $d_0$ role always involves $1$ gate, while the $d_l$ and $d_e$ roles involve $1$ and $2$ gates, respectively. Thus, a physical qubit can follow one of four possible sequences of roles from reset to measurement, resulting in lifetimes of $3$, $4$, or $5$ gates, as summarized in \cref{tab:sliding-counting}.

\begin{table}[t]
    \centering
    \caption{\textbf{Physical qubit lifetimes in the sliding hFC.}}
    \label{tab:sliding-counting}

    \begin{tabular*}{.8\columnwidth}{@{\extracolsep{\fill}}ccc}
        \toprule
        \textbf{Sequence of Roles}
        & \textbf{Qubit Lifetime}
        & \textbf{Label} \\
        \midrule
        $a_e-d_0-d_l$
        & 3
        & A \\
        $a_l-d_0-d_l$
        & 4
        & B \\
        $a_e-d_0-d_e$
        & 4
        & C \\
        $a_l-d_0-d_e$
        & 5
        & D \\
        \bottomrule
    \end{tabular*}
\end{table}

On a torus, the sliding hFC returns to its original configuration every $6$ measurement rounds. During these $6$ rounds, each physical qubit is reset twice, after which the process repeats. Tracking each physical qubit over this period reveals two distinct classes of trajectories. Two thirds of the physical qubits alternate between lifetimes of $3$ and $4$ gates, following either the $A$-$B$ or $A$-$C$ trajectory in \cref{tab:sliding-counting}. The remaining one third consistently follow trajectory $D$, giving them a fixed lifetime of $5$ gates. \cref{fig:sliding-patch-counting} summarizes the lifetime of each physical qubit according to its relative location in the initial patch shown in \cref{fig:sliding}. In particular, every third row of physical qubits has a lifetime of $5$, while the remaining qubits alternate between lifetimes of $3$ and $4$ gates.

\begin{figure}
\centering
\includegraphics[width=\columnwidth]{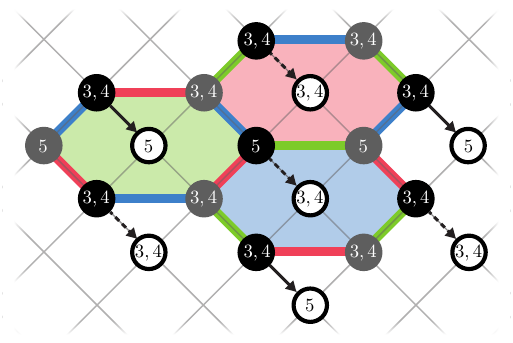}
\caption{\textbf{Qubit lifetimes across the sliding hFC patch.} Qubits with a consistent lifetime of $5$ always play the sequence of roles labeled D in \cref{tab:sliding-counting}. All other qubits alternate between a lifetime of $3$ or $4$ gates. That is, they either exhibit an alternating sequence of A-B or A-C from \cref{tab:sliding-counting}.}
\label{fig:sliding-patch-counting}
\end{figure}

\begin{figure*}[t]
    \centering
    \includegraphics[width=.9\textwidth]{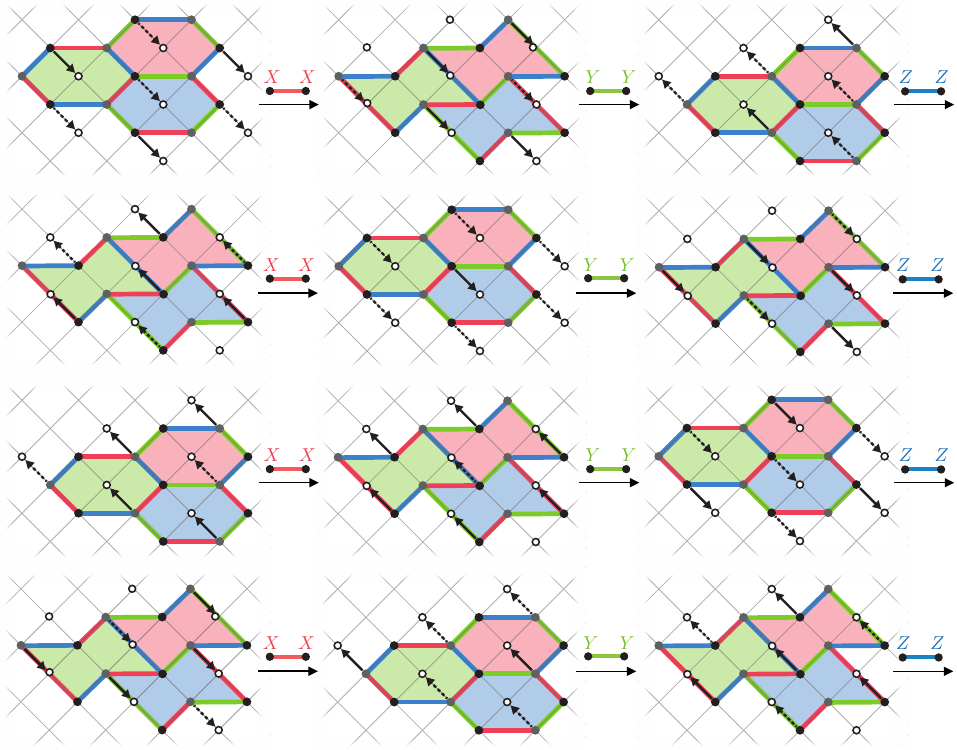}

    \caption{\textbf{Schedule of early- and late-SWAP LRUs in the sliding hFC adapted to a planar layout}, which is compatible with fixed-qubit planar architectures with square-connectivity. In the patch, solid arrows indicate late-SWAP LRUs, while dashed arrows indicate early-SWAP LRUs. The choice of LRU circuit is assigned according to the available local connectivity of data and ancilla qubits, with arrows indicating the direction of data qubit motion. Unlike the sliding hFC implemented on a torus, black and gray qubits are \textit{not} necessarily reset on even and odd rounds respectively.}

    \label{fig:sliding-wiggle}
\end{figure*}

In order to adapt the sliding hFC to a planar layout, the patch must ``wiggle'' back and forth rather than slide continuously in one direction. This motion, shown in \cref{fig:sliding-wiggle}, has an overall period of $12$. Unlike the toric sliding hFC, the planar construction does not reset the black and gray qubits on alternating rounds. Instead, black qubits are reset twice consecutively, followed by two consecutive resets of the gray qubits, and so on. This produces a larger variation in qubit lifetime: physical qubits can experience between $2$ and $7$ gates between reset and measurement, while the average lifetime remains $4$ gates.

\section{Details of numerical simulations} \label{app-sims}

The simulations of the swirling and sliding hFCs, as well as the walking surface code, are implemented in Python using the \texttt{Stim}~\cite{Gidney_2021} and \texttt{PyMatching}~\cite{Higgott_2025} packages, as well as a custom wrapper for leakage noise. The circuits are constructed in \texttt{Stim}, with all two-qubit gates decomposed into CZ gates and single-qubit Hadamard gates. Example circuits are provided in \url{https://github.com/hanna-westerheim/walking-floquet-codes}. We generate a decomposed detector error model (DEM) for each circuit and distance under uniform depolarizing noise, which is then passed to \texttt{PyMatching} for minimum-weight perfect matching.

To generate syndrome data for the leakage noise models, we use a custom \texttt{Stim} wrapper that probabilistically introduces leakage events at each spacetime location and modifies the noiseless circuit to incorporate the effects of each leakage event according to the specified noise model. For example, under the depolarizing leakage model, when a qubit becomes leaked, complete depolarizing noise is applied to every qubit with which it subsequently interacts, up until the leaked qubit is reset. Under the skip-gate leakage model, all subsequent gates involving a leaked qubit are removed from the circuit. When a leaked qubit is measured, its measurement outcome is replaced by a uniformly random outcome.

For each physical error rate, we determine the required number of shots using a pilot sampling procedure. Sampling proceeds until either ten logical failures have been observed or a maximum of $10^6$ shots has been reached. We then perform ten times as many shots as were required during the pilot sampling (up to the maximum number of shots), resulting in an expected total of approximately $100$ logical failures. The statistical uncertainty in the logical error rate is estimated as
\begin{equation}
\sigma_{p_L} \approx \frac{\sqrt{N_{\mathrm{fail}}}}{N_{\mathrm{shots}}},
\end{equation}
where $N_{\mathrm{fail}}$ is the number of logical failures and $N_{\mathrm{shots}}$ is the total number of shots. This procedure is used to generate the data shown in \cref{fig:bars}(a).

For \cref{fig:bars}(b), we extract the scaling of the logical error rate by performing a linear regression on the log-log data shown in \cref{fig:bars}(a). The fit includes data points up to $p=10^{-3}$, with the $p=10^{-3}$ point included for the walking hFC variants but excluded for the walking surface code. The uncertainty in the fitted scaling exponent is obtained from the corresponding regression.

\end{document}